\documentclass[11pt]{article}

\usepackage[T1]{fontenc}
\usepackage[utf8]{inputenc}
\usepackage{lmodern}
\usepackage{microtype}
\usepackage[a4paper,margin=31mm,headheight=15pt]{geometry}
\usepackage{amsmath,amssymb,amsthm,mathtools,bm}
\usepackage{booktabs,tabularx,array}
\usepackage{enumitem}
\usepackage{xcolor}
\usepackage{fancyhdr}
\usepackage[round,authoryear]{natbib}
\usepackage{hyperref}

\definecolor{ink}{HTML}{17253A}
\definecolor{linkblue}{HTML}{2F5D7C}
\hypersetup{
  colorlinks=true,
  linkcolor=linkblue,
  citecolor=linkblue,
  urlcolor=linkblue,
  pdftitle={Anonymous Accountability},
  pdfauthor={Georgy Lukyanov and Muslim Salakhov}
}

\newtheorem{theorem}{Theorem}[section]
\newtheorem{proposition}[theorem]{Proposition}
\newtheorem{lemma}[theorem]{Lemma}
\newtheorem{corollary}[theorem]{Corollary}
\theoremstyle{definition}

\theoremstyle{remark}
\newtheorem{remark}[theorem]{Remark}

\newcommand{\E}{\mathbb E}
\newcommand{\Prb}{\mathbb P}
\newcommand{\1}{\mathbf 1}
\newcommand{\Hid}{\mathsf H}
\newcommand{\Comp}{\mathsf C}
\newcommand{\Ind}{\mathsf I}
\newcommand{\Pois}{\mathrm{Pois}}
\newcommand{\Bin}{\mathrm{Bin}}
\newcommand{\dd}{\,\mathrm d}
\newcommand{\eps}{\varepsilon}
\allowdisplaybreaks

\title{\textbf{Anonymous Accountability}}
\author{Georgy Lukyanov\thanks{Corresponding author. Toulouse School of Economics, 1 Esplanade de l'Universit\'e, 31080 Toulouse Cedex 06, France. Email: \href{mailto:georgy.lukyanov@tse-fr.eu}{georgy.lukyanov@tse-fr.eu}.}
\and Muslim Salakhov\thanks{HSE University, International College of Economics and Finance, 11 Pokrovsky Boulevard, Moscow 109028, Russia. Email: \href{mailto:mdsalakhov@edu.hse.ru}{mdsalakhov@edu.hse.ru}.}}
\date{September 2026}

\begin{document}
\maketitle

\begin{abstract}
We ask how an expert panel should disclose auditable conflicts of interest when reports are named and experts care about their reputations for ability. Before sealed binary assessments are filed, an auditor commits to disclosing no orientation information, each named expert's orientation, or only the number of experts oriented in each direction; disclosure occurs after reports are locked. Holding reporting behaviour fixed, composition disclosure and individual attribution are exactly Blackwell-equivalent for learning the state at every finite panel size. Conditional on named reports remaining public, composition is minimal among state-lossless disclosures of orientations: every lossless public message must make the composition recoverable together with the reports. Incentives nevertheless differ sharply. In the large-panel limit near pooling, the local career returns to following evidence against one's orientation under attribution, secrecy, and composition disclosure stand in the ratio $1:4:5$. On an open set of parameters, the two extreme regimes have only pooling within symmetric cutoff strategies, while composition disclosure admits a tremble-robust informative equilibrium. Near the lower boundary of this region, a sequence of composition equilibria is sparse---the evidence-following rate is of order $1/N$---yet induces a nondegenerate state experiment. Anonymous disclosure can therefore preserve the information available under attribution while creating stronger incentives to produce it.

\medskip
\noindent\textit{Keywords:} expert panels, reputation, transparency, conflicts of interest, information aggregation.\\
\noindent\textit{JEL Classification:} D71, D82, D83.
\end{abstract}

\clearpage

\section{Introduction}

An advisory committee votes on a drug application, and several of its members have consulted for the sponsor; a scientific panel is asked whether a chemical is safe, and part of the panel is funded by the industry that makes it; an investment committee ranks a deal, and its members sit on that committee precisely because they represent different mandates; a forecasting group files directional calls, and every analyst belongs to an institution with a well-known house view. Many such bodies require declarations of interests, although their disclosure and verification rules vary. Taking an auditable register as given, we ask a separate design question: at which \emph{granularity} should its contents be published?

Consider a regulator that asks a panel of named experts to file sealed assessments of a binary risk. Each expert has an auditable orientation---a financial tie, an institutional mandate, a methodological commitment, or any other persistent characteristic that predictably favours one conclusion. Before reports are filed, the auditor commits to publish nothing about these orientations, identify the orientation of every expert, or publish only how many experts point in each direction once reports are locked. The state is eventually verified, so a report leaves a mark on its author's reputation for ability. Which of these policies elicits the most evidence, and how much information about the state has to be given up in order to obtain it?

The usual comparison is between the two extremes, secrecy and attribution. The policy studied here sits between them: disclose the panel's \emph{composition} without disclosing the identity-to-orientation mapping. It is anonymous with respect to conflicts only; reports remain named and every expert remains individually accountable for her accuracy. It is shown that this limited anonymity separates two functions that transparency ordinarily bundles together. Aggregate disclosure preserves everything an outsider needs in order to interpret the panel, while withholding exactly the assignment that would allow the market to excuse a particular expert's orientation-consistent report.\footnote{The two functions are usually discussed as one, under the single heading of accountability. They are logically distinct: an audience that wants to \emph{read} a panel needs to know what the panel is made of, whereas an audience that wants to \emph{forgive} a report needs to know who made it and why. Our disclosure rule supplies the first and withholds the second.}

The environment has $N$ experts, an unknown binary state, and a persistent high or low signal ability. Experts do not know their own ability, so a report is a bet on one's own competence as much as on the state.\footnote{This is the standard reputational-cheap-talk formulation of \citet{ottaviani2006professional,ottaviani2006reputational}. If experts knew their types, reporting would generally use ability-specific cost cutoffs, and the common-fidelity career return compared across regimes would no longer be the same object.} Each expert privately observes a signal, an auditable directional orientation, and a private cost of reporting against that orientation. Reports are simultaneous, mandatory, named, and irrevocable. The disclosure regime is publicly committed in advance and implemented only after reports are locked. Once the state is verified, the market updates every expert's reputation for ability. An expert who receives evidence against her own orientation therefore trades off the private cost of filing that evidence against the induced increase in her expected reputation.

Three results carry the argument. First, holding reporting behaviour fixed, composition disclosure and individual attribution are \emph{exactly Blackwell-equivalent} for state inference at every finite $N$. In one direction this is obvious: the orientation vector can simply be erased. In the other direction, conditional on the composition and named reports, the missing assignment of conflicts to identities can be reconstructed by a randomisation that does not depend on the state. Conditional on named reports remaining public, every state-lossless garbling must make the composition recoverable from its message together with those reports. Composition is therefore minimal in the refinement order among lossless disclosures of orientations. A finite-alphabet extension shows that additive likelihood separability is always sufficient for histogram disclosure to be lossless and, under full-support orientation uncertainty, is also necessary.

Second, the incentive comparison reverses the informational equivalence. Let $x$ denote the probability that an expert follows her signal when it conflicts with her orientation. At pooling, the large-panel derivatives of the reputational return satisfy
\begin{equation}
  \Delta'_{\Ind}(0):\Delta'_{\Hid}(0):
  \lim_{N\to\infty}\Delta'_{\Comp,N}(0)=1:4:5.
  \label{eq:intro-ratio}
\end{equation}
Individual attribution produces the ``1.'' Once a conflict has been attributed, an orientation-consistent report is discounted as evidence, but it is also excused as the predictable consequence of a known bias; locally, the excuse dominates. Secrecy removes the excuse and produces the ``4.'' Composition disclosure then adds something that neither extreme can generate---a genuinely collective effect. Because the market knows how many orientations of each sign exist but not who holds them, a rare orientation-defying report shifts the posterior assignment of conflicts across the entire named report profile. This residual cross-identification produces the ``5.'' The three large-panel coefficients are pure numbers: every signal and ability primitive of the environment cancels.\footnote{All three slopes are proportional to the same constant $\chi=4md$ defined in \eqref{eq:mchi}, which collects the prior, the gap between the two abilities, and the average informativeness of a signal. The ratio in \eqref{eq:intro-ratio} is therefore invariant to all of them. The finite-panel composition coefficient still depends on $N$.}

That local difference creates an equilibrium that neither extreme regime supports. With uniformly distributed reporting costs, let $\kappa$ summarise career stakes, cost density, and the informativeness of ability. On the open parameter region
\[
  \frac45<\kappa<1-d^2,
\]
where $d$ is average signal informativeness, secrecy and individual attribution have only pooling within the regular strategy family, while composition disclosure has an interior regular equilibrium for every panel at or above an explicit sufficient threshold. Pooling still coexists in the unperturbed game. For each fixed qualifying panel, a vanishing forced-fidelity perturbation sends every extreme-regime equilibrium to pooling while every composition equilibrium remains bounded away from it. The resulting decision-accuracy comparison therefore does not rely on choosing one favourable composition root.

Third, the branch emerging near $\kappa=4/5$ is sparse and yet consequential. It satisfies $x_N\sim c^*/N$, so the number of orientation-defying reports stays of order one even as the panel grows without bound. The limiting discrepancy between positive reports and positive orientations is Skellam distributed, and the observer's state-classification accuracy remains strictly above one half. At the same $1/N$ fidelity under secrecy, by contrast, the two state-contingent report distributions merge and accuracy converges to one half. What matters for aggregate learning is the \emph{number} of candid reports rather than their share. A large panel makes this distinction stark, and disclosed composition is precisely what makes the handful legible.\footnote{The mechanism behind the contrast is worth stating plainly. Under secrecy, a handful of candid reports is diluted by the $N-O(1)$ reports that merely repeat orientations, and the dilution is fatal because the observer does not know what the orientations were. Under composition disclosure the observer knows the aggregate benchmark and can therefore read the same handful of reports as a deviation from it. The candid reports are not more numerous; they are merely no longer invisible.} A local robustness result shows that the composition mechanism can also generate first-order accountability when the state is never verified, although the baseline equilibrium results use eventual verification.

These results suggest a practical institutional design. An advisory body could require sealed named assessments, maintain an audited register of directional conflicts, and release an issue-level conflict count only once the assessments are locked. The natural applications are scientific advisory panels whose members have industry or advocacy ties, panels evaluating contested technologies, investment committees whose members represent different mandates, and forecasting teams with auditable institutional or model-class affiliations. We do not claim that any existing institution implements exactly this policy. Eventually scored settings fit the baseline most literally, while the verification-free result covers environments in which peers must supply the benchmark. The timing remains essential: releasing the realised composition \emph{before} filing would permit composition-contingent strategies and is a different design problem altogether.

\subsection{Relation to the literature}

The paper belongs first to the literature on reputational communication. Career concerns are known to distort professional advice even when the state is later observed \citep{ottaviani2006professional,ottaviani2006reputational}; they can induce committees to manipulate information or to present a united front \citep{visser2007committees}; and they can make transparency counterproductive by encouraging conformism \citep{prat2005wrong,levy2007committees}. \citet{fehrler2018transparency} show theoretically and experimentally how transparency changes information aggregation in a career-concerned committee. Their transparency regimes concern deliberation and votes; here all reports remain named and the designed object is the ex post assignment of auditable orientations. Recent work continues to show that reputational motives interact sharply with what the audience knows and with the precision of public communication \citep{balmaceda2021private,mattozzi2023public,gati2025confidence}. Our contribution is a disclosure theorem for a many-expert environment in which the reputational object is ability while the disclosed object is an auditable directional conflict.\footnote{The distinction matters for the comparison with \citet{prat2005wrong}. There, transparency of \emph{actions} rather than of \emph{consequences} is what harms the principal, because an agent who is observed acting will imitate the action a good type is expected to take. Here every action is observed in all three regimes; what varies is how much the audience knows about the private inducements attached to it.}

The closest antecedent is \citet{bourjade2011roles}. They study biased experts who value reputation and show, among other results, that adding a second expert has opposite effects on truth-telling depending on whether individual or only collective contributions are identified. Our object and our mechanism are different. Reports are always individually identified in our model; what may remain anonymous is the mapping from experts to conflicts. This permits an exact comparison of secrecy, composition, and attribution for arbitrary $N$, and it is what produces the finite-panel multiplier, the $1:4:5$ limit, and the exact state-information equivalence between composition and full attribution. The last of these is what allows the paper to separate information about the state from information about responsibility.

The analysis is also related to work on the disclosure of expert bias. Mandatory disclosure need not improve cheap-talk communication \citep{li2008mandatory}, and the interaction between reputation for ability and conflicts of interest can make truthful revelation harder rather than easier \citep{pavesi2014experts}. \citet{levit2022recommendations} show how a common recommendation can conceal an expert's privately known conflict, while \citet{mezzetti2025manipulative} studies verifiable disclosure when the direction of expert bias is unknown. \citet{li2025transparency} shows that transparency about preferences changes the endogenous provision of information to a reputation-minded policymaker. Related work studies how public voting or group-disclosure procedures allocate social pressure and blame \citep{namecorrea2019social,onuchic2025groups}. In those papers votes or outcomes are themselves selectively public; here an outside auditor commits to a conflict disclosure rule, all outcome reports remain public, and what may stay anonymous is the assignment of pre-existing orientations. The conclusion is not that less disclosure is intrinsically better---a claim we would not defend---but that one particular statistic can be sufficient for the state and deliberately insufficient for attribution.

A further strand asks how the agendas, information sources, and voting institutions of expert panels should be composed or disclosed \citep{krishna2001model,gersbach2012information,bhattacharya2018optimality,catonini2024expertise}. \citet{bhattacharya2018optimality}, for instance, show that the desirability of preference diversity depends on how experts' information is correlated, while \citet{catonini2024expertise} compare independent and collective expertise when experts may exchange information before advising. We take the opposite margin: experts never deliberate or observe one another before reporting, the ex ante orientation distribution is held fixed, and what is designed is the revelation of its realised assignment. The relevant panel force is cross-identification, which is absent with a single expert and strengthens with panel size.

The verification-free extension touches the literature on eliciting information without ground truth and peer prediction \citep{prelec2004truth,miller2005peer}. Classical mechanisms in that literature design transfers around the correlation of peer reports. We design no transfer and do not establish global truthful implementation without verification: reports remain named, experts' payoffs increase with posterior reputation for latent ability, and an audited composition count creates the endogenous benchmark. Our result on this margin is correspondingly local.

Finally, \citet{sethi2016communication,sethi2026culture} distinguish the quality of a source's information from how readily its perspective can be interpreted. That distinction motivates our separation of ability and orientation, but the strategic object is different: orientations are auditable, reports are reputationally motivated, and an auditor chooses what orientation information to release. The model is also distinct from \citet{lukyanov2026silence}, which studies a two-expert, two-stage protocol with interim feedback, revision, and silence. Here reports are simultaneous and irrevocable, there is no peer feedback and no option to stay quiet, and every comparative result is generated by ex post conflict attribution in an $N$-expert panel. Neither the disclosure object nor the policy space in one paper nests its counterpart in the other.

Section~\ref{sec:model} presents the environment. Section~\ref{sec:benchmarks} derives the two extreme regimes. Section~\ref{sec:composition} establishes the finite-panel incentive theorem and Section~\ref{sec:equilibrium} the equilibrium-creation result. Section~\ref{sec:information} proves exact information equivalence, minimality, and the finite-alphabet characterisation. Section~\ref{sec:sparse} studies the large-panel equilibrium and the state information it generates. Section~\ref{sec:discussion} discusses implementation, verification-free peer validation, and robustness. Proofs are collected in the Appendix.

\section{Environment}\label{sec:model}

The model is built to make one comparison as clean as possible: the same experts, the same reports, the same verification, and three different answers to the question of what the public is told about the panel's conflicts. Everything that is not needed for that comparison has been kept deliberately thin.

\subsection{State, ability, evidence, and orientation}

There are $N\geq1$ named experts. The state is $\theta\in\{-1,+1\}$, with prior probability one half on each realisation. Expert $i$ has a fixed latent ability
\[
  \tau_i\in\{H,L\},\qquad \Prb(\tau_i=H)=p\in(0,1),
\]
independently across experts. Neither the expert nor the public observes $\tau_i$. Conditional on $(\theta,\tau_i=t)$, expert $i$ observes $s_i\in\{-1,+1\}$ with
\[
  \Prb(s_i=\theta\mid\tau_i=t)=q_t,
  \qquad \frac12<q_L<q_H<1.
\]
Write
\begin{equation}
 h=q_H,\quad \ell=q_L,\quad Q=ph+(1-p)\ell,\quad
 d=2Q-1,
 \label{eq:primitives}
\end{equation}
and define
\begin{equation}
 m=p(1-p)(h-\ell),\qquad \chi=4md.
 \label{eq:mchi}
\end{equation}
Because an expert does not know her own ability, her posterior probability that her own signal is correct is $Q$, the population average of $q_H$ and $q_L$. The quantity $m$ is the covariance-like term that governs how strongly a single correct report moves the market's belief about ability, and $\chi=4md$ is the reputational stake attached to being right in the secrecy benchmark; both will scale every incentive computed below.

Expert $i$ also observes an issue-specific orientation $b_i\in\{-1,+1\}$. Orientations are i.i.d., independent of all other primitives, and symmetric:
\[
 \Prb(b_i=+1)=\Prb(b_i=-1)=\frac12.
\]
An auditor can verify orientations, but the public does not initially know the identity-to-orientation mapping. The model therefore presumes that the auditor's named classifications are confidential before release; if outsiders can already infer every orientation from public affiliations, publishing only a count cannot restore anonymity. An orientation may stand for a sponsor's preferred stance, an institutional mandate, or any persistent auditable characteristic that makes one report privately easier to file than the other. Expert $i$ privately observes the cost $v_i\geq0$ of reporting against $b_i$; costs are i.i.d.\ with cdf $F$ and independent of everything else. Thus $v_i$ is a reduced-form cost of publicly departing from a stance---not a payoff from the observer's ultimate action. The symmetric i.i.d.\ orientation prior is a benchmark restriction used to obtain the scalar incentive formula and the $1:4:5$ comparison. Corollary~\ref{cor:orientationlaw} shows that the state-information equivalence itself requires neither symmetry nor independence across orientations.

\subsection{Reports, disclosure, and reputation}

The timing is as follows.
\begin{enumerate}[leftmargin=2em,itemsep=.25em]
 \item The auditor publicly commits to a disclosure regime $D\in\{\Hid,\Ind,\Comp\}$.
 \item Nature draws the state, types, signals, orientations, and reporting costs.
 \item Experts simultaneously file named and irrevocable reports $r_i\in\{-1,+1\}$. Filing is mandatory. Experts do not observe the realised panel composition before reporting.
 \item After reports are locked, the auditor releases the information prescribed by the committed regime:
 \begin{itemize}[leftmargin=1.5em,itemsep=.1em]
  \item $\Hid$ (hidden): no orientation information is disclosed;
  \item $\Ind$ (individual attribution): the named vector $(b_1,\ldots,b_N)$ is disclosed;
  \item $\Comp$ (composition): only $K=\sum_i\1\{b_i=+1\}$ is disclosed.
 \end{itemize}
 \item An observer sees the public reports and disclosed data and chooses $a\in\{-1,+1\}$, receiving payoff $\1\{a=\theta\}$.
 \item The state is publicly verified. A reputation market sees the same public history and the verified state, then updates each expert's reputation, measured by the posterior probability that her type is $H$.
\end{enumerate}

Let $\mu_i^D$ denote expert $i$'s posterior reputation after verification under $D\in\{\Hid,\Ind,\Comp\}$. Expert $i$ maximises the reporting-stage expectation of
\begin{equation}
 u_i=\rho\mu_i^D-v_i\1\{r_i\neq b_i\}
 -\Phi\1\{r_i\notin\{s_i,b_i\}\},
 \qquad \Phi>\rho>0.
 \label{eq:utility}
\end{equation}
The last term is a no-fabrication discipline, and it is deliberately inert. Since reputation lies in $[0,1]$, an expert whose signal agrees with her orientation never reports against both; and when $s_i=-b_i$, both available reports belong to $\{s_i,b_i\}$, so the term never enters an incentive comparison.\footnote{The device rules out the uninteresting equilibria in which an expert files a report supported by neither her evidence nor her interest---reports that would be off path in any case, but whose off-path reputations would otherwise have to be specified.} The observer's action does not enter the expert's baseline payoff; Section~\ref{sec:discussion} returns to this restriction.

\subsection{Symmetric fidelity and equilibrium}

Throughout we study symmetric monotone strategies indexed by a single number, the fidelity rate $x\in[0,1]$:
\begin{equation}
 r_i=\begin{cases}
 s_i,&s_i=b_i,\\
 s_i,&s_i=-b_i\text{ and }v_i\leq \bar v,\\
 b_i,&s_i=-b_i\text{ and }v_i>\bar v.
 \end{cases}
 \label{eq:strategy}
\end{equation}
When evidence and interest agree there is nothing to decide; when they disagree, the expert follows the evidence if and only if the private cost of doing so is below a cutoff. Thus $x=F(\bar v)$ is the probability of following evidence in a conflict, and it is the only behavioural object in the paper.\footnote{The restriction to this one-parameter family is what makes the three regimes comparable at all: the reputational return $\Delta_{D,N}$ is then a function of a scalar, and the equilibrium condition is a scalar fixed point rather than a system. It is also without loss within the class of symmetric cutoff strategies, since any such strategy is described by its cutoff.} Conditional on $s_i=-b_i$, define the expected reputational benefit of reporting the signal rather than the orientation by
\begin{equation}
 \Delta_{D,N}(x)=\E\!\left[
 \mu_i^D(r_i=s_i)-\mu_i^D(r_i=b_i)\mid s_i=-b_i
 \right],
 \label{eq:delta}
\end{equation}
where peers use fidelity $x$, the focal report is intervened on, and the state is subsequently observed. Symmetry makes the expression independent of the sign of $b_i$. A symmetric equilibrium is then a solution of
\begin{equation}
 x=T_{D,N}(x):=F\!\left(\rho\Delta_{D,N}(x)\right).
 \label{eq:fixedpoint}
\end{equation}
At $x=0$ an orientation-defying report is off path, and its reputational consequences are therefore not pinned down by Bayes' rule. We define $\Delta_{D,N}(0)$ and its right derivative as limits along the symmetric strategy family as $x\downarrow0$; these are the \emph{regular posteriors}. The convention is equivalent to a vanishing symmetric behavioural tremble. Pooling survives under composition disclosure alongside the informative equilibrium, and Section~\ref{subsec:selection} studies the persistence of the positive branch under an explicit perturbation without claiming unique selection.

\begin{remark}
The regime is committed ex ante, but its prescribed orientation information is released only after reports are locked. Composition disclosure is therefore not an interim public signal to the experts. If experts observed the realised $K$ before filing, equilibrium fidelity would depend on $K$ and the scalar fixed point \eqref{eq:fixedpoint} would no longer describe the game. The sealed-report timing is also what separates this design from models of sequential deliberation, feedback, and revision: nobody here learns anything from anybody before speaking.
\end{remark}

\section{The two extreme regimes}\label{sec:benchmarks}

The two extreme regimes admit closed-form reputational returns, and they are worth deriving in full: they are the benchmark against which the panel effect of composition disclosure will be measured, and the contrast between them already contains the paper's central tension in miniature.

\subsection{Hidden orientations}

Under $\Hid$, a type-$t$ expert reports the verified state with probability
\[
 \alpha_t(x)=\frac{1+x(2q_t-1)}2.
\]
Bayes' rule gives the posterior reputations after a correct and an incorrect report:
\[
 \mu_{\Hid}^{+}(x)=\frac{p[1+x(2h-1)]}{1+xd},
 \qquad
 \mu_{\Hid}^{-}(x)=\frac{p[1-x(2h-1)]}{1-xd}.
\]
Their difference is $4mx/(1-d^2x^2)$, which is increasing in $x$: the more the panel is expected to follow its evidence, the more a correct report is worth. Conditional on a conflicting signal, the state agrees with the signal with probability $Q$ and with the orientation with probability $1-Q$.

\begin{proposition}\label{prop:hidden}
For every $x\in[0,1]$,
\begin{equation}
 \Delta_{\Hid}(x)=\frac{\chi x}{1-d^2x^2}.
 \label{eq:deltaH}
\end{equation}
In particular, $\Delta'_{\Hid}(0)=\chi$.
\end{proposition}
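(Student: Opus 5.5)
Under $\Hid$ the market observes only the named reports and the verified state, and the strategies are symmetric and independent across experts. Expert $i$'s reputation therefore depends only on whether her own report matches $\theta$: the other reports are independent of $\tau_i$ given $\theta$. So I would first reduce $\mu_i^{\Hid}$ to the two numbers $\mu_{\Hid}^{+}(x)$ and $\mu_{\Hid}^{-}(x)$, posteriors after a correct and an incorrect report. By the regular-posterior convention, these are evaluated at the peers' equilibrium fidelity $x$ and are the same for the focal expert regardless of her deviation. This reduction is the step needing care. The market cannot see $b_i$ or $v_i$, so its inference about $\tau_i$ runs through the unconditional probability $\alpha_t(x)$ that a type-$t$ expert reports $\theta$.

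To compute $\alpha_t(x)$, I would condition on agreement between $s_i$ and $b_i$. This event has probability $1/2$ independently of $s_i$'s correctness, because $b_i$ is a fair coin independent of everything. If $s_i=b_i$, the report is $s_i$. If $s_i=-b_i$, the report is $s_i$ with probability $x$ and $b_i=-s_i$ otherwise. Hence the report equals $s_i$ with probability $(1+x)/2$, independently of $s_i$, and
\[
\alpha_t(x)=\tfrac{1+x}{2}q_t+\tfrac{1-x}{2}(1-q_t)=\tfrac{1+x(2q_t-1)}{2}.
\]
Bayes' rule with prior $p$ then gives the displayed $\mu_{\Hid}^{\pm}(x)$. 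The denominators use $p(2h-1)+(1-p)(2\ell-1)=d$.

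Next, for the conflict case $s_i=-b_i$, reporting $s_i$ is correct exactly when $s_i=\theta$. Because the expert does not know her type, this has probability $Q$, and reporting $b_i$ is correct with probability $1-Q$. Thus
\[
\Delta_{\Hid}(x)=Q\mu^{+}+(1-Q)\mu^{-}-\bigl[(1-Q)\mu^{+}+Q\mu^{-}\bigr]=d\,(\mu^{+}-\mu^{-}).
\]
It remains to verify $\mu^{+}-\mu^{-}=4mx/(1-d^2x^2)$. Over the common denominator, the numerator is
\[
p\bigl[(1+x(2h-1))(1-xd)-(1-x(2h-1))(1+xd)\bigr]=2px\bigl[(2h-1)-d\bigr].
\]
Since $(2h-1)-d=2(1-p)(h-\ell)$, this equals $4mx$. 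Multiplying by $d$ gives $\chi x/(1-d^2x^2)$ with $\chi=4md$.

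The formula is continuous at $x=0$ with value $0$, which matches the regular-posterior limit definition of $\Delta_{\Hid}(0)$. Differentiating at $0$ gives $\Delta_{\Hid}'(0)=\chi$. The only real obstacle is the independence argument in the first paragraph; everything after it is algebra.
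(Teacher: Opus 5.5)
Your proposal is correct and follows essentially the same route as the paper: compute $\alpha_t(x)$, obtain $\mu^{\pm}_{\Hid}(x)$ by Bayes' rule, show $\mu^+-\mu^-=4mx/(1-d^2x^2)$, and weight the conflict case by $Q$ and $1-Q$ to pick up the factor $d$. One small point of wording: the market evaluates the focal expert at fidelity $x$ simply because that is the conjectured equilibrium strategy, not because of the regular-posterior convention. Under $\Hid$ both reports are on path at every $x$, including $x=0$, where both posteriors equal $p$, so the convention is never needed here.
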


\subsection{Individual attribution}

Under $\Ind$ the market knows whether the focal expert's orientation was aligned with the realised state, and it conditions on that fact. If the state agrees with the signal, the posterior gain from having followed it is
\[
 \frac{ph}{Q}-\frac{p(1-xh)}{1-xQ}.
\]
If the state agrees with the orientation, the corresponding posterior difference is
\[
 \frac{p(1-h)}{1-Q}-
 \frac{p[1-x(1-h)]}{1-x(1-Q)}.
\]
Weighting these two cases by $Q$ and $1-Q$ yields the following formula.

\begin{proposition}\label{prop:individual}
For every $x\in[0,1]$,
\begin{equation}
 \Delta_{\Ind}(x)=
 \frac{\chi x}{(2-x)^2-d^2x^2}.
 \label{eq:deltaI}
\end{equation}
Consequently,
\[
 \Delta'_{\Ind}(0)=\frac{\chi}{4},
 \qquad
 \Delta_{\Ind}(1)=\Delta_{\Hid}(1)=\frac{\chi}{1-d^2}.
\]
\end{proposition}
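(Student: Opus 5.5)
The plan is to compute the two conditional posterior differences displayed before the statement explicitly, weight them by $Q$ and $1-Q$, and simplify. By the symmetry noted after \eqref{eq:delta}, I will fix the focal orientation $b_i=+1$, so the conflicting signal is $s_i=-1$.

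The first step is to justify that only the focal expert's own report, orientation and the verified state matter for $\mu_i^{\Ind}$. Under $\Ind$ the market observes the full vector $(b_1,\ldots,b_N)$, all reports and $\theta$. Types, signals and costs are independent across experts. Hence, conditional on $\theta$ and the disclosed orientations, the peers' reports are independent of $\tau_i$ and drop out of the likelihood ratio.

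For a type-$t$ expert with $b_i=+1$ following \eqref{eq:strategy}, she reports $-1$ only if $s_i=-1$ and $v_i\le\bar v$. So I will use
\[
\Prb(r_i=-1\mid\theta=-1,t)=xq_t,\qquad \Prb(r_i=-1\mid\theta=+1,t)=x(1-q_t).
\]
Bayes' rule then reproduces the four posteriors quoted before the statement:
\begin{itemize}[leftmargin=1.5em,itemsep=.1em]
 \item $ph/Q$ and $p(1-xh)/(1-xQ)$ when $\theta=s_i$;
 \item $p(1-h)/(1-Q)$ and $p[1-x(1-h)]/(1-x(1-Q))$ when $\theta=b_i$.
\end{itemize}
As noted before Proposition~\ref{prop:hidden}, conditional on $s_i=-b_i$ the state agrees with the signal with probability $Q$. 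This gives the weights.

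The main step is the algebraic simplification. For the first difference, put it over the common denominator $Q(1-xQ)$. The numerator is $p[h(1-xQ)-Q(1-xh)]=p(h-Q)$, and $h-Q=(1-p)(h-\ell)$. So the difference equals $m/(Q(1-xQ))$, and multiplying by the weight $Q$ gives $m/(1-xQ)$.

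For the second difference, the numerator expands to
\[
p\,[(1-x)(Q-h)+x(Q-h)]=p(Q-h).
\]
So the difference is $-m/((1-Q)(1-x+xQ))$, and multiplying by $1-Q$ gives $-m/(1-x+xQ)$.

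Adding the two weighted terms gives
\[
\Delta_{\Ind}(x)=\frac{m\,x(2Q-1)}{(1-xQ)(1-x+xQ)}.
\]
Substituting $Q=(1+d)/2$, the denominator becomes
\[
\frac{(2-x-dx)(2-x+dx)}{4}=\frac{(2-x)^2-d^2x^2}{4},
\]
which yields \eqref{eq:deltaI} with $\chi=4md$.

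The consequences follow directly. Since the denominator equals $4$ at $x=0$, the right derivative at the origin is $\chi/4$. At $x=1$ the denominator is $1-d^2$, which matches \eqref{eq:deltaH} evaluated at $x=1$.

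The only delicate point is the case $x=0$, where $r_i=s_i$ is off path. There I will invoke the regular-posterior convention. The posteriors after $r_i=s_i$, namely $ph/Q$ and $p(1-h)/(1-Q)$, do not depend on $x$ at all, so the limit as $x\downarrow0$ is trivial. Consequently \eqref{eq:deltaI} holds on all of $[0,1]$, including $x=0$.
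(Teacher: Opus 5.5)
Your proposal is correct and follows the paper's own route: condition on whether the verified state agrees with the signal or with the orientation, compute the two posterior differences displayed before the proposition, weight them by $Q$ and $1-Q$, and simplify. Your reduction of the weighted terms to $m/(1-xQ)$ and $-m/(1-x+xQ)$, and the factorisation $(1-xQ)(1-x+xQ)=[(2-x)^2-d^2x^2]/4$, supply the algebra that the paper leaves implicit.
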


Full attribution therefore reduces the local career return to orientation-defying evidence by a factor of four relative to secrecy. This is the excuse effect, and it is worth stating carefully, because common sense would suggest the opposite: knowing an expert's conflict ought, one might think, to make her orientation-consistent report less persuasive and hence more costly to file. It does make it less persuasive. But the same knowledge also \emph{explains} the report---an expert who says what her interest dictates is doing what a biased expert of any ability would do, and she is therefore judged only lightly on it. Locally, the second effect dominates the first, and the net result is that attribution protects the conflicted expert from the reputational consequences of following her interest. We use this one-expert force as a benchmark rather than claim it as the paper's novelty; the new effect begins when an anonymous composition constraint links several named reports.

\section{Anonymous composition at finite \texorpdfstring{$N$}{N}}\label{sec:composition}

Composition disclosure does not identify which experts carry which orientations, and this is where the panel starts to matter. A focal reputation must now be computed jointly from the focal report, the named peer reports, and the aggregate constraint $K$---the market has to allocate a known number of conflicts across a known profile of reports, and every report it looks at bears on how it allocates the rest. We first give an exact representation of that posterior, and then extract its local incentive effect.

For $q\in(1/2,1)$, orientation $b$, and state $\theta$, let $P_q^b(r\mid\theta,x)$ be the report probability implied by \eqref{eq:strategy}. It satisfies
\begin{align*}
 P_q^{\theta}(\theta\mid\theta,x)&=1-x(1-q),&
 P_q^{\theta}(-\theta\mid\theta,x)&=x(1-q),\\
 P_q^{-\theta}(\theta\mid\theta,x)&=xq,&
 P_q^{-\theta}(-\theta\mid\theta,x)&=1-xq.
\end{align*}
For $n$ peers, define the bivariate generating polynomial
\begin{equation}
 G_\theta(z,w;x)=\frac12\sum_{b\in\{-1,+1\}}
 z^{\1\{b=+1\}}
 \left[1-P_Q^b(+1\mid\theta,x)+P_Q^b(+1\mid\theta,x)w\right]
 \label{eq:G}
\end{equation}
and
\begin{equation}
 P_{n,\theta}(k,a;x)=[z^kw^a]G_\theta(z,w;x)^n.
 \label{eq:coefficient}
\end{equation}
This is the joint probability that $k$ peers hold positive orientations and $a$ peers file positive reports.

For focal type $t$, focal report $r$, total composition $K$, and $a$ positive peer reports, let
\begin{align}
 \Lambda_t(r,K,a\mid\theta,x)
 ={}&\frac12P_{q_t}^{-1}(r\mid\theta,x)P_{N-1,\theta}(K,a;x)\nonumber\\
 &+\frac12P_{q_t}^{+1}(r\mid\theta,x)P_{N-1,\theta}(K-1,a;x).
 \label{eq:focalL}
\end{align}
The focal posterior is
\begin{equation}
 \mu(r,K,a,\theta;x)=
 \frac{p\Lambda_H(r,K,a\mid\theta,x)}
 {p\Lambda_H(r,K,a\mid\theta,x)+(1-p)\Lambda_L(r,K,a\mid\theta,x)}.
 \label{eq:muC}
\end{equation}
Exchangeability makes the number of positive peer reports sufficient for the focal posterior even though the reports themselves are named---which identity filed which report is irrelevant once the count is known.\footnote{This is a convenience, not a substantive restriction. It fails as soon as experts are heterogeneous in ability distribution or in orientation prior, in which case the focal posterior depends on the full named profile. The likelihood representation generalises, but the scalar reduction and the multiplier in Theorem~\ref{thm:multiplier} need not.} Fixing $b_i=+1$ and $s_i=-1$, the exact finite-panel incentive is
\begin{align}
 \Delta_{\Comp,N}(x)
 ={}&\sum_{\theta\in\{-1,+1\}}\Prb(\theta\mid s_i=-1)
 \sum_{k=0}^{N-1}\sum_{a=0}^{N-1}P_{N-1,\theta}(k,a;x)\nonumber\\
 &\quad\times\left[
 \mu(-1,k+1,a,\theta;x)-\mu(+1,k+1,a,\theta;x)
 \right].
 \label{eq:deltaCexact}
\end{align}
Equation \eqref{eq:deltaCexact} is a finite sum of polynomial-rational terms. Its level is cumbersome and not especially informative. Its derivative at pooling, on the other hand, is remarkably simple.

\begin{theorem}\label{thm:multiplier}
Under regular posteriors,
\begin{equation}
 \Delta'_{\Comp,N}(0)=g_N\Delta'_{\Hid}(0),
 \qquad
 g_N=\frac54-\frac{5}{2N}+\frac{N+5}{N2^{N+1}}.
 \label{eq:gN}
\end{equation}
Moreover, $g_N$ is strictly increasing, $g_1=1/4$,
\[
 g_N<1\quad(N\leq9),\qquad
 g_N>1\quad(N\geq10),\qquad
 \lim_{N\to\infty}g_N=\frac54.
\]
Hence the large-panel local slopes satisfy
\begin{equation}
 \Delta'_{\Ind}(0):\Delta'_{\Hid}(0):
 \lim_{N\to\infty}\Delta'_{\Comp,N}(0)=1:4:5.
 \label{eq:145}
\end{equation}
\end{theorem}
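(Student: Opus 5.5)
The plan is to expand $\Delta_{\Comp,N}(x)$ to first order in $x$ around pooling, organising the computation by the number of orientation-defying reports in the realised profile. At $x=0$ every report equals the orientation. So for small $x$ the peer profile has no defiance with probability $1-O(x)$, exactly one defiance with probability $O(x)$, and two or more with probability $O(x^2)$, which is negligible for the slope. The key observation is that the market, seeing named reports and $K$, learns the discrepancy $\#\{r_j=+1\}-K$. Any nonzero discrepancy certifies at least one defiance, but the market does not know \emph{who} defied. Under regular posteriors, a profile with a single discrepancy is explained, to leading order, by exactly one defiance. The candidate defiers are the experts whose report has the sign that could have produced the discrepancy, namely those reporting $-1$ when $\#\{r_j=+1\}=K-1$. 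Each candidate of type $t$ carries relative likelihood $q_t$ or $1-q_t$ according to whether her report matches the verified $\theta$.

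The steps are as follows.
\begin{enumerate}[leftmargin=2em,itemsep=.25em]
\item Take $\theta$-by-$\theta$ limits of \eqref{eq:muC} as $x\downarrow0$ on each one-defiance profile. Dividing numerator and denominator of \eqref{eq:focalL} by $x$ shows that the focal posterior is $p$ mixed with a ``defier'' posterior: $ph/Q$ if the defiant report matched $\theta$, $p(1-h)/(1-Q)$ otherwise. The mixing weight is the focal expert's share, $1/c$, of the $c$ candidates, where $c$ is the number of $-1$ reporters when the discrepancy points that way. This also gives $\Delta_{\Comp,N}(0)=0$.
\item Compute the $O(x)$ part of $\Delta_{\Comp,N}$. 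It comes from three sources:
\begin{enumerate}[leftmargin=1.5em,itemsep=.1em]
\item the level difference $\mu(-1)-\mu(+1)$ when all peers are loyal, multiplied by an $O(x)$ weight;
\item the $O(x)$ correction to $\mu(-1,\cdot)$ and $\mu(+1,\cdot)$ on the no-peer-defiance profile;
\item peer-defiance profiles of probability $O(x)$. Here a loyal focal report may still be implicated, because the focal is a candidate whenever her report has the right sign. A defiant focal report on such a profile creates a two-defiance explanation, whose posterior is an $O(1)$ quantity independent of the focal choice to leading order.
\end{enumerate}
I will write each contribution as $\chi$ times an expectation of a function of $c$. Here $k\sim\Bin(N-1,\tfrac12)$ is the number of positive peer orientations, and $c$ is $k+1$ or $N-1-k+1$.
\item Evaluate these expectations using identities such as $\E\bigl[\tfrac{1}{k+1}\bigr]=\frac{2^{N}-1}{N2^{N-1}}$ for $k\sim\Bin(N-1,\tfrac12)$. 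This is the source of the $1/N$ and $N^{-1}2^{-N}$ terms in $g_N$. Collecting terms should yield \eqref{eq:gN}; as a check, $N=1$ must return $\chi/4$, matching Proposition~\ref{prop:individual}, since with one expert $K$ reveals $b_i$.
\end{enumerate}

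For the properties of $g_N$, I will compute $g_{N+1}-g_N$. The dominant positive term is $\frac{5}{2N(N+1)}$, and the $2^{-N}$ terms are shown to be smaller, directly for $N\ge2$ and by inspection for small $N$. Evaluating $g_9$ and $g_{10}$ numerically then locates the crossing, and the limit $5/4$ is immediate. Finally, \eqref{eq:145} follows by combining Proposition~\ref{prop:hidden} and Proposition~\ref{prop:individual}: the three slopes are $\chi/4:\chi:5\chi/4$.

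The main obstacle is the bookkeeping in step 2, specifically the peer-defiance profiles. There the focal's loyal report shares blame with a defiant peer, and this cross-identification is exactly what lifts $g_N$ above $1$. I would first handle it by conditioning on $\theta$, on the defier's report relative to $\theta$, and on the focal's sign. I would then verify the resulting formula against $N=1,2$ by brute-force expansion of \eqref{eq:deltaCexact}.
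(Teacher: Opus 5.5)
Your plan is essentially the paper's proof: a one-departure expansion at pooling, conditional on the binomial count of peer orientations. Your three first-order sources are exactly the paper's terms: $R'(0)$ at $a=j$, the $-\lambda_j x$ reweighting of the level $1/(j+1)-1/(k+1)$, and the $a=j\pm1$ entries. These are followed by the same binomial identities and the same $g_{N+1}-g_N$ computation. The paper handles the posterior bookkeeping through the exact score identity $\mu_y=p+mS_y$, which follows from the likelihood being affine in $q$. That identity also supplies the $O(x)$ corrections in your source (b), which the leading-order candidate-mixture picture cannot deliver on its own.

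One correction when you execute source (c). A defiant focal report on a one-peer-departure profile creates a two-defiance explanation only when both departures move the discrepancy in the same direction. When they offset, the discrepancy vanishes and the focal posterior is $p$ to leading order. The whole contribution then comes from the loyal report being implicated: these are the paper's entries $R_a(0)=-1/(Q(j+2))$ at $a=j+1$ and $R_o(0)=1/((1-Q)(k+2))$ at $a=j-1$.
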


The proof, in Appendix~\ref{app:finite}, conditions on the number of peer orientations aligned with the verified state and carries out a complete one-departure expansion. It reduces the derivative to three binomial expectations, each of which is evaluated in closed form. No numerical computation enters \eqref{eq:gN} at any point.

The theorem makes the panel nature of the mechanism transparent. With $N=1$ the composition \emph{is} the orientation, so composition and individual attribution coincide and $g_1=1/4$, exactly the attribution value. With many experts the disclosed count couples every report to the possible assignments of orientations across the peers. When one expert files a rare orientation-defying report, the market must decide whether the resulting discrepancy is focal evidence or the trace of a different assignment of the same conflicts to the same profile of reports. That alternative-assignment term is the residual cross-identification effect, and its limiting contribution lifts the secrecy benchmark from four normalised units to five.

The comparison is therefore not monotone in disclosure, which is the least intuitive part of the result. Small panels contain too little cross-identification to offset the excuse that partial attribution creates, so composition disclosure first overtakes individual attribution and only later, from ten experts onward, overtakes secrecy as well. The cutoff $N=10$ and the increasing, single-crossing shape are exact properties of the symmetric binary benchmark, not universal empirical thresholds. A regulator choosing between these regimes for a three-person committee and for a fifty-person panel is not facing the same problem.

\section{Equilibrium creation}\label{sec:equilibrium}

To translate the local incentive theorem into equilibrium behaviour we need a cost distribution. Suppose
\begin{equation}
 v_i\sim\mathrm{Unif}[0,1/f],\qquad
 F(v)=\begin{cases}0,&v<0,\\ \min\{fv,1\},&v\geq0,
 \end{cases}
 \label{eq:uniform}
\end{equation}
and define the dimensionless career parameter
\begin{equation}
 \kappa=\rho f\chi.
 \label{eq:kappa}
\end{equation}
Condition~\eqref{eq:openregion} below keeps the extreme-regime cutoffs and the composition value at $x=1$ in the linear part of the cdf. The sparse composition arguments likewise use unsaturated cutoffs because their arguments are $O(N^{-1})$. Only the density of costs near zero matters for the local argument; the uniform specification is a convenience that also delivers the global comparisons.\footnote{The parameter $\kappa=\rho f\chi$ is the product of the weight $\rho$ that experts place on reputation, the density $f$ of reporting costs near zero, and the reputational stake $\chi$. It is the one dimensionless number that decides whether evidence is worth filing at all, and the region \eqref{eq:openregion} below requires it to be large---careers must matter and the marginal cost of candour must be small---but not so large that candour would be automatic under any regime.}

\begin{proposition}\label{prop:pooling}
If $\kappa<1-d^2$, then $x=0$ is the only symmetric fixed point under $\Hid$ and under $\Ind$.
\end{proposition}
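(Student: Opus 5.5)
We use the closed forms of Propositions~\ref{prop:hidden} and~\ref{prop:individual} together with the uniform specification \eqref{eq:uniform}. Since $\Delta_{\Hid}$ and $\Delta_{\Ind}$ are nonnegative on $[0,1]$, the fixed-point map is $T_{D}(x)=\min\{f\rho\Delta_D(x),1\}$. Writing $\kappa=\rho f\chi$, this becomes
\[
 T_{\Hid}(x)=\min\Bigl\{\frac{\kappa x}{1-d^2x^2},1\Bigr\},\qquad
 T_{\Ind}(x)=\min\Bigl\{\frac{\kappa x}{(2-x)^2-d^2x^2},1\Bigr\}.
\]
Clearly $x=0$ is a fixed point of both maps. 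It remains to show that $T_D(x)<x$ for every $x\in(0,1]$.

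\emph{Hidden regime.} For $x\in(0,1]$ the linear branch satisfies $\kappa x/(1-d^2x^2)\le \kappa x/(1-d^2)<x$, because $1-d^2x^2\ge 1-d^2>\kappa$. In particular this value is below $1$, so the cap is never active and $T_{\Hid}(x)<x$.

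\emph{Attribution regime.} We need $(2-x)^2-d^2x^2>\kappa$ on $(0,1]$. The function $\phi(x)=(2-x)^2-d^2x^2$ is decreasing on $[0,1]$, since $\phi'(x)=-2(2-x)-2d^2x<0$. Hence $\phi(x)\ge\phi(1)=1-d^2>\kappa$, so $T_{\Ind}(x)\le\kappa x/(1-d^2)<x$.

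Two points need care.

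\begin{itemize}
\item At $x=1$ the strict inequality must not be lost to saturation of $F$. This is handled by the bound above: $T_D(1)\le\kappa/(1-d^2)<1$ in both regimes.
\item Denominators must stay positive. This holds because $d<1$.
\end{itemize}

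The only slightly delicate step is confirming that $\Delta_{\Ind}(0)=0$ under regular posteriors, so that pooling is indeed a fixed point. This follows directly from formula \eqref{eq:deltaI} evaluated at $x=0$, since that formula holds on all of $[0,1]$. The same argument applies to $\Delta_{\Hid}(0)=0$ via \eqref{eq:deltaH}.

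Since we work within the symmetric cutoff family, every symmetric fixed point is a solution of \eqref{eq:fixedpoint}. The argument above therefore shows that $x=0$ is the unique such solution in both regimes.
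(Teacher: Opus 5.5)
Your proof is correct and takes the same approach as the paper's proof. Both use the closed forms from Propositions~\ref{prop:hidden}--\ref{prop:individual} and note that the denominators $1-d^2x^2$ and $(2-x)^2-d^2x^2$ are minimised on $[0,1]$ at $x=1$, where they equal $1-d^2>\kappa$; hence the linear index lies strictly below $x\le 1$, the cdf never saturates, and $T_D(x)<x$ for $x>0$. Your extra checks, that $x=0$ is a fixed point and that the denominators stay positive, are details the paper leaves implicit.
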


\begin{proof}
For $x>0$, the raw linear indices from Propositions~\ref{prop:hidden}--\ref{prop:individual} satisfy
\[
 f\rho\Delta_{\Hid}(x)=\frac{\kappa x}{1-d^2x^2},
 \qquad
 f\rho\Delta_{\Ind}(x)=
 \frac{\kappa x}{(2-x)^2-d^2x^2}.
\]
Both denominators attain their minimum on $[0,1]$ at $x=1$, where they equal $1-d^2$. Each index is therefore strictly below $x\leq1$, so the cdf does not saturate and $T_D(x)<x$.
\end{proof}

\begin{theorem}\label{thm:creation}
Suppose
\begin{equation}
 \frac45<\kappa<1-d^2.
 \label{eq:openregion}
\end{equation}
Define
\[
 N_0(\kappa):=\min\{N\geq1:\kappa g_N>1\}.
\]
For every $N\geq N_0(\kappa)$, composition disclosure has an interior regular symmetric equilibrium, while secrecy and individual attribution have only pooling in the strategy family \eqref{eq:strategy}. The threshold is finite, and the explicit bound
\begin{equation}
 N\geq \left\lceil\frac{10\kappa}{5\kappa-4}\right\rceil
 \label{eq:sufficientN}
\end{equation}
is sufficient. The parameter region is nonempty whenever $d^2<1/5$.
\end{theorem}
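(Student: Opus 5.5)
The extreme-regime half of the statement is immediate. Since $\kappa<1-d^2$ by \eqref{eq:openregion}, Proposition~\ref{prop:pooling} already shows that $x=0$ is the only fixed point under $\Hid$ and $\Ind$ in the family \eqref{eq:strategy}. The substance is therefore the composition regime. I would prove existence of an interior fixed point of $T_{\Comp,N}(x)=F(\rho\Delta_{\Comp,N}(x))$ by an intermediate-value argument on $\phi(x):=T_{\Comp,N}(x)-x$ over $(0,1]$.

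\textbf{Step 1: continuity.} By \eqref{eq:deltaCexact}, $\Delta_{\Comp,N}$ is a finite sum of polynomial-rational terms. It is continuous on $(0,1]$, and under regular posteriors it extends continuously to $x=0$ with $\Delta_{\Comp,N}(0)=0$. At pooling the focal report is uninformative about ability, so both posteriors equal $p$; this matches the secrecy and attribution formulas, which also vanish at $0$. Since $F$ is continuous, $\phi$ is continuous on $[0,1]$.

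\textbf{Step 2: $\phi>0$ near zero.} By Theorem~\ref{thm:multiplier} and Proposition~\ref{prop:hidden}, the right derivative of $\rho f\Delta_{\Comp,N}$ at $0$ is $\rho f g_N\chi=\kappa g_N$. For small $x>0$ the argument of $F$ is positive and small, so $F$ is linear there, and
\[
T_{\Comp,N}(x)=\kappa g_N x+o(x).
\]
Hence $N\geq N_0(\kappa)$, i.e.\ $\kappa g_N>1$, gives $\phi(x)>0$ for small $x>0$. Because $g_N$ is strictly increasing, the condition $\kappa g_N>1$ holds for every $N\ge N_0$ once it holds at $N_0$. Finiteness of $N_0$ follows from $g_N\to5/4$ and $\kappa>4/5$.

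\textbf{Step 3: $\phi(1)<0$.} This is the main obstacle. I expect to show $\Delta_{\Comp,N}(1)=\chi/(1-d^2)$, so that $T_{\Comp,N}(1)=\kappa/(1-d^2)<1$. At $x=1$ every expert reports her signal regardless of orientation. Reports are then independent of orientations, so $K$ carries no information about types given $(r,\theta)$. In \eqref{eq:focalL}, $P^{b}_{q_t}(r\mid\theta,1)$ does not depend on $b$ and $P_{N-1,\theta}(k,a;1)$ factorizes into a binomial in $k$ times a binomial in $a$. The posterior \eqref{eq:muC} therefore collapses to the secrecy posterior $\mu^\pm_{\Hid}(1)$, giving $\Delta_{\Comp,N}(1)=\Delta_{\Hid}(1)=\chi/(1-d^2)$, as the paper notes for $\Ind$. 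The intermediate value theorem then yields a root in $(0,1)$, and it is interior because $\phi(1)\neq0$ and $\phi>0$ near $0$. It is a regular equilibrium by the regular-posterior convention, which matters only at $x=0$.

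\textbf{Step 4: the explicit bound and nonemptiness.} For \eqref{eq:sufficientN}, I would lower-bound $g_N\geq \frac54-\frac5{2N}$, since the last term in \eqref{eq:gN} is positive. Then $\kappa g_N>1$ holds whenever $\kappa(\frac54-\frac5{2N})>1$, i.e.\ $N>\frac{10\kappa}{5\kappa-4}$. The ceiling version needs the strict inequality handled. If $10\kappa/(5\kappa-4)$ is an integer, equality would give $\kappa(\frac54-\frac5{2N})=1$, but the positive remainder term still makes $\kappa g_N>1$. So $N\ge\lceil\cdot\rceil$ suffices. Finally, the interval $(4/5,1-d^2)$ is nonempty iff $d^2<1/5$.
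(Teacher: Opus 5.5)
Your proposal follows the paper's proof step for step: $T_{\Comp,N}(x)>x$ just right of zero because $\kappa g_N>1$ (with monotonicity of $g_N$ for $N\geq N_0$), $T_{\Comp,N}(1)=\kappa/(1-d^2)<1$ because orientations are ancillary at full fidelity, the intermediate value theorem, the bound $g_N>\frac54-\frac5{2N}$ with the strictly positive remainder handling the integer case, and Proposition~\ref{prop:pooling} for the extreme regimes. One justification needs correcting: $\Delta_{\Comp,N}(0)=0$ is not true because both posteriors equal $p$ at pooling---under the regular extension the posterior after an orientation-defying report is not $p$ (conditional on $J=j$ peers aligned and no peer departures it is $p+m/(Q(j+1))$ in the opposed-focal case of Appendix~\ref{app:finite}, and under attribution it is $ph/Q$), and the level vanishes only after averaging the conditional value $m[(j+1)^{-1}-(N-j)^{-1}]$ over $J\sim\Bin(N-1,1/2)$ by symmetry, which is the fact you should cite for $T_{\Comp,N}(x)=\kappa g_Nx+o(x)$.
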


\begin{proof}
For $N\geq N_0(\kappa)$, Theorem~\ref{thm:multiplier} gives $T'_{\Comp,N}(0)=\kappa g_N>1$, so $T_{\Comp,N}(x)-x>0$ just to the right of zero. At $x=1$, reports equal signals and orientations are ancillary for on-path reputation. Hence
\[
 T_{\Comp,N}(1)=\rho f\Delta_{\Comp,N}(1)
 =\frac{\kappa}{1-d^2}<1.
\]
Continuity gives an interior fixed point. Since $g_N\uparrow5/4$ and $\kappa>4/5$, the threshold $N_0(\kappa)$ is finite. Moreover,
\[
 g_N=\frac54-\frac5{2N}+\frac{N+5}{N2^{N+1}}
 >\frac54-\frac5{2N}.
\]
If the right-hand side of \eqref{eq:sufficientN} is strictly above $10\kappa/(5\kappa-4)$, the displayed lower bound gives $\kappa g_N>1$; if it equals that ratio, the strictly positive final term in $g_N$ gives the same conclusion. Proposition~\ref{prop:pooling} gives the remaining claims. The interval in \eqref{eq:openregion} is nonempty if and only if $d^2<1/5$.
\end{proof}

The theorem is an existence result and not a uniqueness claim, and it is worth being explicit about what survives alongside the informative equilibrium. Pooling remains a fixed point of the regular extension under $\Comp$, and it can also be supported with permissible off-path beliefs. Once $\kappa g_N>1$, however, pooling is locally unstable in the scalar regular best-response map, which is the sense in which composition disclosure \emph{creates} something. Section~\ref{sec:sparse} locates the positive branch, and Section~\ref{subsec:selection} establishes its robustness to a vanishing forced-fidelity perturbation without claiming unique selection. The sufficient panel size in \eqref{eq:sufficientN} diverges as $\kappa\downarrow4/5$, so the local creation mechanism is intrinsically a large-panel effect near the boundary.

\section{Information without attribution}\label{sec:information}

The observer has to act before the state is verified, and it is this observer---not the expert and not the market that scores her afterwards---whose information the disclosure policy is ultimately supposed to serve. Let
\[
 R_N=\sum_i\1\{r_i=+1\}
\]
be the number of positive reports. Conditional on $K=k$,
\begin{align}
 R_N\mid(\theta=+1,K=k)
 &\sim\Bin(k,1-x(1-Q))+\Bin(N-k,xQ),
 \label{eq:Rplus}\\
 R_N\mid(\theta=-1,K=k)
 &\sim\Bin(k,1-xQ)+\Bin(N-k,x(1-Q)),
 \label{eq:Rminus}
\end{align}
with independent summands. Thus $(K,R_N)$ is sufficient for the state under composition disclosure. The next result says something considerably stronger: nothing at all is lost.

\begin{theorem}\label{thm:blackwell}
For every finite $N$ and every fixed fidelity $x\in[0,1]$, the binary-state experiment generated by the named reports and $K$ under $\Comp$ is Blackwell-equivalent to the experiment generated by the named reports and the full orientation vector under $\Ind$. Equivalently, both Le Cam deficiencies are zero.
\end{theorem}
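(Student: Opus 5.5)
The proof constructs garblings in both directions between the two experiments. Write the $\Ind$ experiment as $E_{\Ind}=(r,b)$, with $r=(r_1,\dots,r_N)$ the named reports and $b=(b_1,\dots,b_N)$ the orientations. Write the $\Comp$ experiment as $E_{\Comp}=(r,K)$, with $K=\sum_i \1\{b_i=+1\}$.

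The direction from $\Ind$ to $\Comp$ is immediate. The map $(r,b)\mapsto(r,K(b))$ is deterministic and state-independent, so $E_{\Comp}$ is a garbling of $E_{\Ind}$.

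For the converse, I will exhibit a Markov kernel $M\big((r,b)\mid (r,k)\big)$ that does not depend on $\theta$. Applied to $E_{\Comp}$, it must reproduce the joint law of $(r,b)$ given $\theta$. The natural choice is the conditional law of $b$ given $(r,K=k,\theta)$, provided this law turns out not to depend on $\theta$. So the key step is to write out the joint likelihood under \eqref{eq:strategy}. Agents are i.i.d. given $\theta$, with type integrated out, which gives
\[
 \Prb(r,b\mid\theta)=2^{-N}\prod_{i}P_Q^{b_i}(r_i\mid\theta,x).
\]
From the four report probabilities, each factor depends only on whether $r_i=b_i$ and whether $b_i=\theta$. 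Writing $A=\#\{i:b_i=\theta\}$, the likelihood is a product over the four cells of $\{r_i=b_i\}\times\{b_i=\theta\}$. The question is whether, holding $r$ and $K$ fixed, the $\theta$-dependent part factors out of the sum over assignments $b$.

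I would do this by a change of variables. Let $R=\#\{i:r_i=+1\}$ and $J=\#\{i:r_i\ne b_i\}$. For fixed $(r,k)$, the counts in all four cells are determined by $(\theta,k,R)$ together with the number $u$ of agents with $r_i=+1,b_i=-1$. Since $r$ is fixed and $\theta=\pm1$, I will take $\theta=+1$ without loss, by symmetry. The cell counts are then:
\begin{itemize}
 \item $b_i=r_i=+1$: $R-u$ agents;
 \item $b_i=+1,r_i=-1$: $k-R+u$ agents;
 \item $b_i=-1,r_i=+1$: $u$ agents;
 \item $b_i=r_i=-1$: $N-k-u$ agents.
\end{itemize}
Hence $\Prb(b\mid r,k,\theta=+1)$ is proportional to a weight $w_+(u)$ depending on $u$ alone. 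Under $\theta=-1$ the roles of the probabilities swap, giving a weight $w_-(u)$.

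Equivalence then needs $w_+(u)/w_-(u)$ to be independent of $u$. This is where the main obstacle lies, and I would check it directly. Increasing $u$ by one moves one positive-report agent from $b=+1$ to $b=-1$, and one negative-report agent from $b=+1$ to $b=-1$ in the opposite cell. Under $\theta=+1$ this multiplies the likelihood by
\[
 \frac{xQ}{1-x(1-Q)}\cdot\frac{1-xQ}{x(1-Q)}.
\]
Under $\theta=-1$ the factor is
\[
 \frac{1-x(1-Q)}{xQ}\cdot\frac{x(1-Q)}{1-xQ}.
\]
These are reciprocals, not equal, so the naive conditional law is \emph{not} state-free. I must therefore be more careful.

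The correct reconstruction should instead exploit the product form. The likelihood ratio $\Prb(r,b\mid+1)/\Prb(r,b\mid-1)$ is a product over $i$ of per-agent ratios. For an agent with $b_i=+1$, the ratio is $(1-x(1-Q))/(xQ)$ if $r_i=+1$ and $x(1-Q)/(1-xQ)$ if $r_i=-1$. For $b_i=-1$ it is the inverse pattern. So the log likelihood ratio is additively separable, of the form $\sum_i \phi(r_i)b_i$, which equals $c_1 R+c_2 K+c_3$ plus a term in $\sum_i r_ib_i$.

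The plan is to verify that $\sum_i r_ib_i$ is itself a function of $(R,K)$ alone. Indeed, $\sum_i r_ib_i=N-2(\#\{r_i\ne b_i\})$. Checking this against the cell counts shows that $\phi(+1)=-\phi(-1)$ makes the log ratio linear in $R$ and $K$ only; this is the additive-separability property mentioned in the introduction. It follows that the likelihood ratio of $E_{\Ind}$ is a measurable function of $(R,K)$, so $(R,K)$ is sufficient for $\theta$ in $E_{\Ind}$. By the factorisation theorem, the conditional law of $(r,b)$ given $(R,K)$ is then state-free. That conditional law, applied to $E_{\Comp}$ (which contains $(R,K)$), is the required kernel $M$. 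With garblings in both directions, both Le Cam deficiencies vanish and Blackwell equivalence follows.
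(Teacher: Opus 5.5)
Your overall architecture is the paper's: deterministic erasure of the mapping from $\Ind$ to $\Comp$, and, for the reverse direction, a state-free conditional law of the missing assignment. However, the one substantive step is never correctly established. That step is showing that, at fixed named reports, the full-data likelihood ratio depends on $b$ only through $K$. Along the way you assert its negation.

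Your $u\to u+1$ computation is wrong. By your own cell counts, raising $u$ with $r$ and $k$ fixed moves one positive-report expert from $b=+1$ to $b=-1$ and one negative-report expert from $b=-1$ to $b=+1$. (Moving both to $b=-1$ would change $K$ by two.) The likelihood is then multiplied by
\[
 \frac{xQ\cdot x(1-Q)}{[1-x(1-Q)]\,[1-xQ]}
\]
under $\theta=+1$, and by exactly the same factor under $\theta=-1$, not its reciprocal. Hence $w_+(u)/w_-(u)$ is constant in $u$, and the ``naive'' law $\Prb(b\mid r,K)$ is precisely the state-free reconstruction kernel the paper uses. Your claim that it depends on $\theta$ is false. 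Were it true, the likelihood ratio would differ across assignments sharing the same $(r,K)$, and the theorem itself would fail for $x\in(0,1)$. The claim also contradicts your final step: a state-free law of $(r,b)$ given $(R,K)$ implies a state-free law of $b$ given $(r,K)$.

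The fallback does not close the gap. Your per-agent ratios $\Prb(\cdot\mid\theta=+1)/\Prb(\cdot\mid\theta=-1)$ are misstated. With $\lambda$ and $a_x$ as in the paper, they are $1/a_x$ and $1/\lambda$ for $b_i=+1$ (reports $+1$ and $-1$ respectively), and $\lambda$ and $a_x$ for $b_i=-1$. The log ratio is therefore additive, $\alpha(r_i)+\beta(b_i)$, not of the interaction form $\phi(r_i)b_i$. An interaction term is exactly what would destroy sufficiency. Indeed $\sum_ir_ib_i=N-2(k-R+2u)$ depends on $u$, so it is not a function of $(R,K)$: with $N=2$, $r=(+1,-1)$, $K=1$, the two admissible assignments give $\sum_ir_ib_i=\pm2$. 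Moreover, imposing $\phi(+1)=-\phi(-1)$ would make the log ratio a function of $\sum_ir_ib_i$ alone, the opposite of what you need.

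The missing computation is short. Summing the correct per-agent log ratios over your four cells gives
\[
 -(R-u)\log a_x-(k-R+u)\log\lambda+u\log\lambda+(N-k-u)\log a_x
 =(R-k)\log\lambda+(N-k-R)\log a_x,
\]
in which $u$ cancels; this is \eqref{eq:exactLLR}. Once you have it, either reverse kernel works. The paper's keeps $r$ and draws $b$ from $\Prb(b\mid r,K)$; yours redraws $(r,b)$ from its state-free conditional law given the sufficient statistic $(R,K)$. At $x=0$ the mismatched cells have probability zero, so argue there directly from $\Prb(r,b\mid\theta)=2^{-N}\1\{r=b\}$, which holds in both states.
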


\begin{proof}
Erasing the identity-to-orientation mapping is a deterministic garbling from $\Ind$ to $\Comp$. For the reverse direction, define
\[
 D_N=R_N-K,\qquad
 \lambda=\frac{Q}{1-Q},\qquad
 a_x=\frac{1-xQ}{1-x(1-Q)}.
\]
For every individually attributed observation $(r,b)$ in the common support, direct multiplication of report probabilities gives
\begin{equation}
 \log\frac{\Prb(r,b\mid\theta=+1)}
 {\Prb(r,b\mid\theta=-1)}
 =D_N\log\lambda+(N-2K-D_N)\log a_x.
 \label{eq:exactLLR}
\end{equation}
The right-hand side depends only on $(K,R_N)$, observed under $\Comp$. The likelihood ratio is therefore constant across all orientation vectors compatible with a given composition observation. Conditional on that observation, the distribution of the missing orientation assignment is independent of the state. Drawing from this conditional distribution reconstructs an individually attributed observation through a state-independent Markov kernel. Hence $\Comp$ also garbles to $\Ind$. At $x=0$ the on-path assignment is $b=r$, while at $x=1$ orientations are ancillary; define the reconstruction kernel arbitrarily off the common support at these endpoints.
\end{proof}

\begin{corollary}\label{cor:orientationlaw}
Fix a common fidelity $x$. Theorem~\ref{thm:blackwell} remains true if the symmetric i.i.d.\ orientation prior is replaced by any commonly known joint law $\Pi_N$ on $\{-1,+1\}^N$ that is independent of the state, abilities, and signals. Thus orientations may be asymmetric, heterogeneous across names, or correlated across experts.
\end{corollary}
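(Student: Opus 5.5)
The plan is to rerun the argument of Theorem~\ref{thm:blackwell} with the prior $\Pi_N$ in place of the symmetric product law. The one structural fact needed is that the \emph{conditional} law of reports given $(\theta,b)$ is unchanged. Reports are generated by \eqref{eq:strategy} with independent signals, abilities and costs, and the orientation vector enters only as a conditioning argument. Hence, for every orientation vector $b$ and report profile $r$,
\[
 \Prb(r\mid b,\theta)=\prod_{i}P_Q^{b_i}(r_i\mid\theta,x),
\]
where $Q$ appears because abilities are integrated out independently per expert. Since $\Pi_N$ is independent of $\theta$, we then have $\Prb(r,b\mid\theta)=\Pi_N(b)\prod_iP_Q^{b_i}(r_i\mid\theta,x)$.

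First, the direction from $\Ind$ to $\Comp$ is immediate: erasing the identity-to-orientation mapping is still a deterministic garbling, whatever the prior. For the reverse direction, I form the likelihood ratio of an attributed observation $(r,b)$. The factor $\Pi_N(b)$ cancels, and the product term is exactly the one computed in \eqref{eq:exactLLR}. That identity was obtained by direct multiplication of the per-expert report probabilities and never used the prior. Concretely, each expert contributes a factor determined by the pair $(b_i,r_i)$: aligned-and-agree, aligned-and-disagree, and so on. Collecting these factors by sign gives exponents that depend only on $K=\#\{b_i=+1\}$ and $R_N$. So the log-likelihood ratio is again $D_N\log\lambda+(N-2K-D_N)\log a_x$, a function of $(K,R_N)$ alone.

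Second, I build the reconstruction kernel. Conditional on the composition observation $(r,K)$, the posterior over compatible $b$ with $\sum_i\1\{b_i=+1\}=K$ is proportional to $\Pi_N(b)\prod_iP_Q^{b_i}(r_i\mid\theta,x)$. Under each $\theta$, the product equals a $\theta$-independent function of $(r,b)$ times a $\theta$-dependent factor that depends only on $(K,R_N)$. The cleanest way to see this is to note that the ratio between the two states is constant on the fibre, and then write the $\theta=-1$ term as a $\theta$-free function. After normalisation the $\theta$-dependent factor cancels, so the conditional law of $b$ given $(r,K)$ is the same for both states. This is a state-independent Markov kernel from the $\Comp$ observation to an $\Ind$ observation, and it reproduces the joint law of $(r,b)$ under each $\theta$. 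Hence $\Comp$ garbles to $\Ind$, and both Le Cam deficiencies vanish.

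The main obstacle is the support. With an arbitrary $\Pi_N$ that may put zero mass on some vectors, and at the endpoints $x\in\{0,1\}$, the set of compatible $b$ in a fibre can be empty or degenerate under one state. I handle this as the theorem does. On the common support the argument above goes through unchanged, because zero-prior vectors simply drop out of both numerator and denominator. At $x=0$ the only compatible assignment is $b=r$, which is trivially state-independent. At $x=1$ reports are independent of $b$, so the conditional law of $b$ is just $\Pi_N$ restricted to the fibre. Off-support observations have probability zero under both states, so the kernel may be defined arbitrarily there. Asymmetry, heterogeneity across names and correlation across experts all enter only through $\Pi_N(b)$, which cancels throughout.
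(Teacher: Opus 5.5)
Your proposal is correct and follows the paper's proof: you factor $\Prb(r,b\mid\theta)=\Pi_N(b)\prod_iP_Q^{b_i}(r_i\mid\theta,x)$, cancel $\Pi_N(b)$ so that the likelihood ratio is again \eqref{eq:exactLLR} and depends only on $(K,R_N)$, and conclude that the conditional law of $b$ given $(r,K)$ is state-independent and serves as the reverse Markov kernel. Your treatment of zero-prior vectors and of the endpoints $x\in\{0,1\}$ spells out what the paper covers by defining the kernel arbitrarily off the common support.
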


\begin{proof}
For every orientation vector in the support of $\Pi_N$,
\[
 \Prb(r,b\mid\theta)=\Pi_N(b)
 \prod_{i=1}^N P_Q^{b_i}(r_i\mid\theta,x).
\]
The factor $\Pi_N(b)$ cancels from the state likelihood ratio, and the remaining product is exactly \eqref{eq:exactLLR}, hence a function only of $(K,R_N)$. The conditional law of the missing assignment given $(r,K)$ is therefore state-independent, which supplies the same reverse Markov kernel as before.
\end{proof}

The theorem gives the paper its sharp design separation. Composition disclosure removes information about \emph{who} holds which conflict; it removes no information about \emph{which state} generated the panel. Any behavioural difference between $\Comp$ and $\Ind$ is therefore an incentive effect and not a mechanical loss of evidence available to the decision maker---which is exactly the counterfactual a sceptical regulator would want ruled out before agreeing to publish less.\footnote{The reconstruction is genuinely constructive: Appendix~\ref{app:likelihood} exhibits the Markov kernel that redistributes the missing conflicts across identities, and \eqref{eq:exactLLR} is the reason it does not depend on the state. In practice a decision maker never needs to perform the reconstruction; its existence is what licenses the claim that she has lost nothing by not being able to.}

Exact equivalence by itself does not establish that the count is the right boundary of orientation disclosure: perhaps an even coarser release would also preserve the state experiment. To answer that question, let an auditor use any committed, possibly randomised, state-independent disclosure kernel $\delta(\dd m\mid r,b)$ after reports are locked. Named reports remain public; the natural orientation-only policies are the special case $\delta(\dd m\mid b)$. Call the resulting policy \emph{state-lossless} if the experiment $(r,M)$ is Blackwell-equivalent to full attribution $(r,b)$.

\begin{theorem}\label{thm:minimaldisclosure}
Fix $x\in(0,1)$ and any commonly known orientation law $\Pi_N$ satisfying Corollary~\ref{cor:orientationlaw}. A disclosure kernel is state-lossless if and only if there is a measurable $k^*$ such that
\[
 k^*(r,M)=K(b)
\]
almost surely under the induced joint law in either state. For an orientation-only kernel, this is equivalent to recoverability from $M$ itself: for all supported $b,b'$ with $K(b)\neq K(b')$, the message laws $\delta(\cdot\mid b)$ and $\delta(\cdot\mid b')$ must be mutually singular. Hence $(r,K)$ is minimal in the refinement order among state-lossless public observations that retain the named report vector.
\end{theorem}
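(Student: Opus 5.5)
For sufficiency, I take the route of Theorem~\ref{thm:blackwell} and Corollary~\ref{cor:orientationlaw}. Suppose $k^*(r,M)=K(b)$ almost surely. Then $(r,M)$ garbles deterministically to $(r,K)$ via $(r,M)\mapsto(r,k^*(r,M))$. By Theorem~\ref{thm:blackwell}, $(r,K)$ is Blackwell-equivalent to $(r,b)$. Conversely, $(r,b)$ garbles to $(r,M)$ through the state-independent kernel $\delta$, since $M$ is generated from $(r,b)$ alone. Hence $(r,M)$ is sandwiched between two equivalent experiments and is itself equivalent to full attribution.

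Necessity is the substantive direction. Assume $(r,M)$ is equivalent to $(r,b)$. For binary-state experiments, Blackwell equivalence means the laws of the posterior (equivalently, of the likelihood ratio) coincide. Since $(r,M)$ is a garbling of $(r,b)$, the likelihood ratio of $(r,M)$ is the conditional expectation, under $\theta=-1$, of the full-attribution likelihood ratio $L(r,b)$ given $(r,M)$. By \eqref{eq:exactLLR}, $\log L=D_N\log\lambda+(N-2K-D_N)\log a_x$.

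Equality in law of a random variable and its conditional expectation forces it to be measurable with respect to the conditioning $\sigma$-field. The cleanest argument is strict convexity: take $\phi(t)=t\log t$ or $t^2$ and compare $\E\phi(L)$ with $\E\phi(\E[L\mid r,M])$. Jensen's inequality is strict unless $L$ is $(r,M)$-measurable, and equality of laws forces equality of these expectations. So $L(r,b)$ is almost surely a function of $(r,M)$.

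Given $r$, the report count $R_N$ is known. The log-likelihood is then affine in $K$ with slope
\[
 -\log\lambda-\log a_x,
\]
where I have used $D_N=R_N-K$. This slope is nonzero for $x\in(0,1)$, because $\lambda>1$ and $a_x<1$ both hold there. So $K$ is recovered from $(r,L)$, and hence from $(r,M)$, which yields the measurable $k^*$. This is exactly why the endpoints $x\in\{0,1\}$ are excluded: there $a_x$ degenerates.

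For the orientation-only case, $M$ and $r$ are conditionally independent given $b$. I need to turn recoverability from $(r,M)$ into mutual singularity of $\delta(\cdot\mid b)$ across compositions.

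\emph{Singularity implies recoverability.} If the message laws are singular across compositions, there are disjoint sets carrying each composition's message mass, and $K$ is a measurable function of $M$ alone.

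\emph{Recoverability implies singularity.} For $x\in(0,1)$ every report vector has positive probability under every $b$ in either state. Suppose two supported $b,b'$ with $K(b)\ne K(b')$ had message laws with a common absolutely continuous component. Then fixing any $r$, the event $\{k^*(r,M)=K(b)\}$ must have full $\delta(\cdot\mid b)$ mass and null $\delta(\cdot\mid b')$ mass. This contradicts the overlap, so the laws are singular. Because the statement is required for every $r$, the condition reduces to one on $M$ alone.

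Minimality is then immediate. Any state-lossless $(r,M)$ determines $(r,K)$ through $k^*$, so $(r,M)$ refines $(r,K)$.

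I expect the main obstacle to be the measurability step: showing rigorously that equivalence forces the likelihood ratio to be $(r,M)$-measurable when $M$ lives on a general measurable space. I would handle it with the strict-convexity identity on an $f$-divergence, since both experiments share the same divergence value when they are equivalent.
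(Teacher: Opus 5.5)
Your architecture matches the paper's proof: the sandwich argument for sufficiency, $L_Y=\E_-[L_X\mid Y]$ with a strict-convexity (chi-squared) identity to make the full likelihood ratio $(r,M)$-measurable, injectivity in $K$ at fixed $r$, and full support of reports to turn recoverability into mutual singularity. But the step on which the theorem turns is justified incorrectly.

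You write that the slope $-\log\lambda-\log a_x$ is nonzero ``because $\lambda>1$ and $a_x<1$.'' Those facts give $\log\lambda>0$ and $\log a_x<0$. The slope is therefore a sum of terms of opposite sign, and nothing follows about whether it vanishes. At $x=1$ both of your inequalities still hold, since $a_1=(1-Q)/Q<1<\lambda$. Yet $\lambda a_1=1$, so the slope is exactly zero. There orientations are ancillary: a constant message is lossless while $K$ is not recoverable from $r$, so the theorem fails. Your argument would thus ``prove'' the result at an excluded point, which shows it does not identify the real reason.

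What you need is $\lambda a_x>1$, that is, $Q(1-xQ)>(1-Q)\{1-x(1-Q)\}$. The difference of the two sides equals $(2Q-1)(1-x)$, which is positive exactly when $x<1$ (given $Q>1/2$). This is the one-line computation the paper uses.

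Your explanation of the endpoints also needs correcting. At $x=0$ one has $a_0=1$, so the slope is $-\log\lambda\neq0$ and injectivity does not fail. What fails there is the full-support property you use in the orientation-only step: on path $b=r$, and the experiment is uninformative. At $x=1$ the failure is precisely $\lambda a_x=1$, not a degeneration of $a_x$. With the correct inequality in place, the rest of your proposal goes through.
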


The proof is in Appendix~\ref{app:disclosure}. The key is to rewrite \eqref{eq:exactLLR} as
\begin{equation}
 \log\frac{\Prb(r,b\mid\theta=+1)}
 {\Prb(r,b\mid\theta=-1)}
 =R_N\log\lambda+(N-R_N)\log a_x
 -K\log(\lambda a_x).
 \label{eq:minimalLLR}
\end{equation}
For $Q>1/2$ and $x\in(0,1)$, $\lambda a_x>1$. Since every report vector has positive probability under every orientation vector, pooling two different counts changes the likelihood ratio at the same named report profile and therefore destroys information. The endpoints are deliberately excluded. At $x=0$, reports reproduce orientations and the state experiment is uninformative; at $x=1$, reports reproduce signals and orientations are ancillary. In either case the count need not be separately disclosed.

The qualification about named reports is essential. If reports could also be anonymised, $(R_N,K)$ would already be state-sufficient, and the likelihood-ratio statistic could be coarser still. Our minimality result concerns orientation disclosure because named reports are required for individual reputation. It is also an information-disclosure result, not an incentive-optimality claim: we compare three canonical regimes but do not establish that composition maximises candour among every state-lossless refinement of the count.

\subsection{The boundary of aggregate sufficiency}\label{subsec:alphabet}

The binary result is an instance of a general likelihood-separability condition. Let the state, report, and orientation sets $\Theta$, $\mathcal R$, and $\mathcal B$ be finite. Conditional on a state $\theta$ and orientation $b$, an expert's report has a common full-support kernel $p_\theta(r\mid b)$; reports are conditionally independent across experts. Let the joint orientation law $\Pi_N$ be state-independent. Write $\mathcal H(b)$ for the histogram of an orientation vector, fix a reference state $\theta_0$, and define
\[
 \ell_\theta(r,b)=\log\frac{p_\theta(r\mid b)}
 {p_{\theta_0}(r\mid b)}.
\]

\begin{theorem}\label{thm:alphabet}
Suppose $N\geq2$.
\begin{enumerate}[label=(\roman*),leftmargin=2.2em]
 \item If, for every $\theta\neq\theta_0$, there are functions $\alpha_\theta$ and $\beta_\theta$ such that
 \begin{equation}
   \ell_\theta(r,b)=\alpha_\theta(r)+\beta_\theta(b),
   \label{eq:additivellr}
 \end{equation}
 then named reports together with $\mathcal H(b)$ are Blackwell-equivalent to full attribution.
 \item If the orientation law has full support on $\mathcal B^N$, the converse holds: histogram disclosure is state-lossless only if \eqref{eq:additivellr} holds.
 \item Whenever \eqref{eq:additivellr} holds, $S_N$ below, restricted to $\operatorname{supp}\Pi_N$, is the minimal lossless orientation statistic:
 \begin{equation}
   S_N(b)=\left(\sum_{i=1}^N\beta_\theta(b_i)
   \right)_{\theta\neq\theta_0}.
   \label{eq:minimalscore}
 \end{equation}
 Any state-independent randomised garbling $(r,M)$ that is Blackwell-equivalent to full attribution must permit recovery of $S_N$ jointly from $(r,M)$; for an orientation-only policy, $M$ alone must reveal $S_N$.
\end{enumerate}
\end{theorem}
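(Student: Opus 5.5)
Part (i) is the easy direction and follows the proof of Theorem~\ref{thm:blackwell}. Under \eqref{eq:additivellr}, conditional independence gives, for every $\theta\neq\theta_0$ and every supported $(r,b)$,
\[
 \log\frac{\Prb(r,b\mid\theta)}{\Prb(r,b\mid\theta_0)}
 =\sum_{i}\alpha_\theta(r_i)+\sum_i\beta_\theta(b_i),
\]
because the factor $\Pi_N(b)$ cancels. The first sum depends on the named reports. The second is a function of $\mathcal H(b)$, since it equals $\sum_{c\in\mathcal B}\mathcal H(b)_c\beta_\theta(c)$. Hence the whole vector of likelihood ratios is constant on each fibre $\{b:\mathcal H(b)=h\}$. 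It follows that the conditional law of $b$ given $(r,\mathcal H(b))$ does not depend on $\theta$. Drawing from that law gives a state-independent Markov kernel from histogram disclosure back to full attribution. Erasure gives the other garbling, exactly as before. For (iii) the same computation shows that $S_N(b)$ already makes all likelihood ratios functions of $(r,S_N)$, so $(r,S_N)$ is lossless.

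For the necessity half of (iii), I would argue as in Theorem~\ref{thm:minimaldisclosure}. Take a lossless $(r,M)$. Blackwell equivalence with full attribution for a finite-state experiment means equal distributions of the likelihood-ratio vector. I would use the standard fact that the posterior under $(r,M)$ is the conditional expectation of the posterior under $(r,b,M)$, since $M$ is a state-independent garbling of $(r,b)$. That finer posterior coincides with the posterior under $(r,b)$. Equal laws together with the conditional-expectation relation force equality almost surely, by strict convexity (for instance, comparing second moments). So the likelihood-ratio vector under $(r,b)$ is a.s.\ a function of $(r,M)$, and subtracting the known $\sum_i\alpha_\theta(r_i)$ recovers $S_N$.

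For an orientation-only policy I still have to remove $r$. Full support of the report kernel means every $r$ has positive probability under every $b$. If $M$ did not determine $S_N$, there would be $b,b'$ with different scores and non-singular message laws. On their common part one can fix any $r$ and find two $b$'s that yield the same $(r,M)$ but different ratios, which is a contradiction.

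Part (ii) is the main obstacle. Assume full support of $\Pi_N$ and histogram losslessness. By the a.s.\ argument above, $\sum_i\ell_\theta(r_i,b_i)$ must be a function of $(r,\mathcal H(b))$ for all $r\in\mathcal R^N$ and $b\in\mathcal B^N$. Both are available because of full support and the full-support report kernel. Since $N\geq2$, I take $b=(c,c',b_3,\dots)$ and its transposition $b'=(c',c,b_3,\dots)$, which have the same histogram. With arbitrary $r_1,r_2$ and the remaining coordinates fixed, this yields
\[
 \ell_\theta(r_1,c)+\ell_\theta(r_2,c')=\ell_\theta(r_1,c')+\ell_\theta(r_2,c).
\]
Hence $\ell_\theta(r_1,c)-\ell_\theta(r_1,c')$ does not depend on $r_1$. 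Fixing a reference $c_0$ and setting $\beta_\theta(c)=\ell_\theta(r^*,c)-\ell_\theta(r^*,c_0)$ and $\alpha_\theta(r)=\ell_\theta(r,c_0)$ then gives \eqref{eq:additivellr}. The delicate points are the measure-theoretic passage from equal Blackwell experiments to pointwise equality of likelihood ratios on the finite support, and making sure the transposition is admissible. Full support of $\Pi_N$ is exactly what guarantees the latter, and this is why it is assumed.
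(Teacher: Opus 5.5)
Your proposal is correct and follows essentially the same route as the paper: cancel $\Pi_N(b)$ to get sufficiency of $(r,\mathcal H(b))$ (and of $(r,S_N)$), show that any lossless $(r,M)$ must make the full-data likelihood ratios measurable in $(r,M)$, recover $S_N$ by subtracting $\sum_i\alpha_\theta(r_i)$, use the common part of non-singular message laws for the orientation-only case, and use the histogram-preserving transposition to kill the two-by-two cross-differences of $\ell_\theta$ for necessity. The only variation is that you run the measurability step through the posterior martingale and second moments rather than the chi-squared conditional-variance identity; this is the same variance-decomposition argument, and it has the small advantage that bounded posteriors make integrability automatic.
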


The condition is restrictive but economically interpretable. The additive decomposition is unique only up to state-specific constants, but those constants shift $S_N$ independently of $b$ and therefore leave its induced partition unchanged. For example,
\[
 p_\theta(r\mid b)=
 \frac{q_b(r)\exp\{\eta_\theta^{\mathsf T}T(r)\}}
 {Z_b(\eta_\theta)}
\]
satisfies it when $q_b(r)>0$: the state shifts a common report score, while orientations alter baseline weights and normalisations. By contrast, take two states, two reports, and orientations $A,B$, with
\[
 \begin{array}{c|cc}
 &p_0(1\mid b)&p_1(1\mid b)\\ \hline
 A&4/5&3/5\\
 B&3/10&1/2
 \end{array}.
\]
For reports $(1,0)$, the likelihood ratio is $15/28$ under assignment $(A,B)$ and $10/3$ under assignment $(B,A)$. The histogram is the same, so anonymous composition loses state information. The binary baseline succeeds because its log likelihood ratio has exactly the additive form, with the count $K$ as the minimal score.

\begin{remark}
Theorems~\ref{thm:blackwell} and~\ref{thm:minimaldisclosure} hold conditional on a common fixed fidelity $x$. When disclosure changes equilibrium behaviour---which is the whole point of Sections~\ref{sec:equilibrium} and~\ref{sec:sparse}---the induced experiments need not be equivalent, and Section~\ref{sec:sparse} shows that they are not. Corollary~\ref{cor:orientationlaw} removes symmetry, independence, and exchangeability from the orientation law for this information result. Theorem~\ref{thm:alphabet} identifies what replaces the binary structure: additive state likelihood ratios. Arbitrary behavioural heterogeneity need not have that form. The incentive theorems continue to use the symmetric i.i.d.\ binary benchmark.
\end{remark}

\section{Large panels and sparse accountability}\label{sec:sparse}

Theorem~\ref{thm:creation} establishes an interior equilibrium for large panels but does not say where it is. The relevant branch, close to the boundary $\kappa=4/5$, turns out to have the sparse scaling
\begin{equation}
 x_N=\frac{c}{N},\qquad c>0.
 \label{eq:sparsescaling}
\end{equation}
After the state is verified, normalise orientations and reports by
\[
 e_j=\1\{b_j=\theta\},\qquad y_j=\1\{r_j=\theta\}.
\]
Composition and verification reveal $L=\sum_je_j$, while the reports reveal $Y=\sum_jy_j$; their discrepancy is $Z_N=Y-L$. Under \eqref{eq:sparsescaling}, the opposed orientations that switch to correct evidence and the aligned orientations that switch to incorrect evidence converge to independent counts
\begin{equation}
 V\sim\Pois\!\left(\frac{Qc}{2}\right),
 \qquad
 U\sim\Pois\!\left(\frac{(1-Q)c}{2}\right),
 \qquad Z=V-U.
 \label{eq:poissonpair}
\end{equation}
Let $\varpi_z=\Prb(Z=z)$. Define
\begin{equation}
 A(c,Q)=\sum_{z\in\mathbb Z}\frac{\varpi_z^2}{\varpi_{z+1}},
 \qquad
 B(c,Q)=\sum_{z\in\mathbb Z}\frac{\varpi_z^2}{\varpi_{z-1}}.
 \label{eq:AB}
\end{equation}
Both sums are finite for $c>0$.

\begin{theorem}\label{thm:sparsemultiplier}
For every fixed $c>0$,
\begin{equation}
 \Delta_{\Comp,N}(c/N)=\frac{mc}{N}
 \left[3d+QA(c,Q)-(1-Q)B(c,Q)\right]+o(N^{-1}).
 \label{eq:sparsedelta}
\end{equation}
The convergence is uniform for $c$ in compact subsets of $(0,\infty)$. Relative to secrecy,
\begin{equation}
 \Gamma(c,Q):=\lim_{N\to\infty}
 \frac{\Delta_{\Comp,N}(c/N)}{\Delta_{\Hid}(c/N)}
 =\frac{3d+QA(c,Q)-(1-Q)B(c,Q)}{4d}.
 \label{eq:Gamma}
\end{equation}
Moreover, $\Gamma$ has a continuously differentiable extension to zero and
\begin{equation}
 \Gamma(c,Q)=\frac54-\frac{c}{16}+O(c^2)
\quad\text{as }c\downarrow0,
\qquad
 \lim_{c\to\infty}\Gamma(c,Q)=1.
\label{eq:Gammaexpansion}
\end{equation}
In particular, $\Gamma_c(c,Q)=-1/16+O(c)$ as $c\downarrow0$.
\end{theorem}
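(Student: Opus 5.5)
I will start from the exact finite-panel representation \eqref{eq:deltaCexact} and reparametrise it in the state-normalised variables $e_j,y_j$. Conditioning on the verified state, the focal posterior depends on the public data only through the focal report and the pair $(L,Y)$. Equivalently, after subtracting the focal contribution, it depends on the peer discrepancy $Z^{(N-1)}=Y_{-i}-L_{-i}$ together with $L_{-i}$.

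The first step is to write the focal likelihood \eqref{eq:focalL} as a two-term mixture. The focal expert is either aligned with $\theta$ or opposed, and the peers must then supply either $L$ or $L-1$ aligned orientations. The ratio of the two peer terms is what produces cross-identification. Under $x=c/N$, I will use the factorisation of the peer generating polynomial $G_\theta(z,w;x)^{N-1}$: given $L_{-i}=\ell$, the peers' discrepancy is $\Bin(\ell,x(1-Q))$ subtracted from $\Bin(N-1-\ell,xQ)$. Since $\ell$ concentrates at $N/2$ with Gaussian fluctuations, these converge to $\Pois((1-Q)c/2)$ and $\Pois(Qc/2)$. This gives \eqref{eq:poissonpair} by a local limit, with error $O(N^{-1/2})$ in the ratios of interest.

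The second step is a first-order expansion in $x$ of the focal posterior difference $\mu(-1,\cdot)-\mu(+1,\cdot)$. The focal type only enters through $P^{b}_{q_t}(r\mid\theta,x)$, which is affine in $x$ with slope involving $q_t$. So $\mu_H$-minus-prior is $O(x)$, and its coefficient is $p(1-p)(h-\ell)=m$ times a ratio of mixture weights. There are three contributions, split by the case (state agrees with the focal signal, probability $Q$; or with the focal orientation, probability $1-Q$) and by the report.

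The terms in which the focal report matches its orientation behave as under secrecy/attribution and give the $3d$. I expect this is the $4d$ secrecy piece minus the $d$ absorbed by the excuse effect. I will verify that bookkeeping by matching the $c\downarrow0$ limit to the value $5/4$ in Theorem~\ref{thm:multiplier}.

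The defying-report terms produce posterior weights of the form $\varpi_{z}/\varpi_{z\pm1}$. Here the market compares "focal defied, peers have discrepancy $z$" with "focal aligned differently, peers have discrepancy $z\pm1$". Averaging these against the law $\varpi_z$ yields $QA(c,Q)-(1-Q)B(c,Q)$. Dividing by $\Delta_{\Hid}(c/N)=4mdc/N+O(N^{-3})$ from Proposition~\ref{prop:hidden} then gives \eqref{eq:Gamma}.

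The main obstacle is justifying the exchange of limit and summation in the $A,B$-type terms, since $\varpi_z^2/\varpi_{z+1}$ involves ratios of small probabilities in the tails. I will first show that the Skellam ratios satisfy $\varpi_z/\varpi_{z+1}\le C(1+|z|)/c$. This follows from the Bessel representation $\varpi_z=e^{-c/2}\lambda^{z/2}I_{|z|}(c\sqrt{Q(1-Q)})$, with $\lambda=Q/(1-Q)$ as before, together with the analogous binomial-difference bounds uniform in $N$. Dominated convergence then applies, and uniformity on compact $c$-sets follows because the bounds are locally uniform.

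For \eqref{eq:Gammaexpansion}, I will expand the Bessel functions as $c\downarrow0$. Using $\varpi_0=1-c/2+O(c^2)$, $\varpi_1\approx Qc/2$ and $\varpi_{-1}\approx(1-Q)c/2$, the dominant terms are $\varpi_0^2/\varpi_{1}$ and $\varpi_1^2/\varpi_0$. This gives $QA\approx 2/c+\dots$ and $(1-Q)B\approx2/c+\dots$, so the leading singular parts cancel.

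The next orders should produce $d+O(c)$ and yield $5/4-c/16$. Differentiability of the extension follows from the convergent power-series structure. For $c\to\infty$, a Gaussian approximation of $Z$ gives $\varpi_z/\varpi_{z\pm1}\to\lambda^{\mp1/2}$-type constants, and hence $QA-(1-Q)B\to d$ and $\Gamma\to1$.
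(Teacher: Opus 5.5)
\textbf{Gap: the constant $3d$ is never derived, and your decomposition of it is wrong.} Your overall route is the paper's: a two-hypothesis mixture for the focal likelihood, the Poisson limit of the peer discrepancy given $L$, a first-order posterior expansion, adjacent-ratio sums, and a series expansion of the Skellam masses.

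Let $P_z=P^{c}_{L,N}(z)$ be the peer discrepancy law when $L$ peers are aligned, and let $\widetilde P_z=P^{c}_{L-1,N}(z+1)$. Let $R_y^N(z,L)$ be the focal posterior given correct-report indicator $y$, total discrepancy $z$, and $L$ aligned orientations. The likelihoods are affine in $q_t$. Hence, for a focal expert opposed to the state, and uniformly on a window $|z|\le J_N$,
\[
 \frac{N}{mc}\bigl[R_1^N(z+1,L)-p\bigr]=1+\frac{P_z}{\widetilde P_z}+o(1),
 \qquad
 \frac{N}{mc}\bigl[p-R_0^N(z,L)\bigr]=1+\frac{\widetilde P_z}{P_z}+o(1).
\]
The ``1'' in the defying term comes from the competing hypothesis that the focal expert holds the other orientation and did not depart; that likelihood, $1-x(1-q_t)$, is also type-sensitive. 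So defying reports are not pure ratio terms. Conversely, the orientation-matching term also carries an adjacent ratio, which averages to exactly one because $\sum_z\widetilde P_z=1$. Averaging against $P_z$ gives $1+A$ and $2$; the aligned case gives $-(1+B)$ and $-2$. Matching reports therefore contribute $2d$, which is their secrecy value, not $3d$. Defying reports contribute $d+QA-(1-Q)B$, not $QA-(1-Q)B$. The total agrees with the statement, but your proposal does not derive it.

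\textbf{The $5/4$ calibration cannot substitute for this computation.} Matching $\Gamma(0^+)$ to $\lim_N g_N=5/4$ presupposes $\lim_{c\downarrow0}\lim_{N\to\infty}N\Delta_{\Comp,N}(c/N)/c=\lim_{N\to\infty}\Delta'_{\Comp,N}(0)$. Nothing in your proposal justifies that interchange of limits; the agreement is an output of the theorem, not an input. The calibration would also fail as you state it: if the surviving order of $QA-(1-Q)B$ were $d$, a constant $3d$ would give $\Gamma(0^+)=1$.

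The correct small-$c$ bookkeeping is as follows. The $1/c$ poles sit in the $z=0$ terms, $\varpi_0^2/\varpi_1$ in $A$ and $\varpi_0^2/\varpi_{-1}$ in $B$, not in $\varpi_1^2/\varpi_0=O(c^2)$. The $z=1$ term of $A$ tends to $2$. Then $QA=2/c+d+(1-Q)c/4+O(c^2)$ and $(1-Q)B=2/c-d+Qc/4+O(c^2)$, so $QA-(1-Q)B=2d-dc/4+O(c^2)$, and with $3d$ this gives $\Gamma=5/4-c/16+O(c^2)$.

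\textbf{Large-$c$ step.} In the bulk of $Z$ the adjacent ratios tend to one, not to $\lambda^{\mp1/2}$. The $\lambda^{\mp1/2}$ asymptotic holds only at fixed $z$, which carries vanishing mass. Ratios of $\lambda^{\mp1/2}$ would give $QA-(1-Q)B\to0$ and $\Gamma\to3/4$. A clean argument uses the Poisson identity $\E[V\mid Z=z]=\alpha\varpi_{z-1}/\varpi_z$ with $\alpha=Qc/2$. It gives $A=1+\operatorname{Var}(\E[V\mid Z])/\alpha^2\in[1,1+1/\alpha]$, and likewise for $B$, with no Gaussian approximation (which would not control the tail sums anyway).

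\textbf{Unproved ratio bound.} Your dominated-convergence step rests on the uniform-in-$N$ bound $P^c_{\ell,N}(z)/P^c_{\ell-1,N}(z+1)\le C(1+|z|)$, which you assert but do not prove. That bound is also what makes your first-order posterior expansion legitimate. The $x$-dependent term in each mixture denominator is negligible only when the adjacent ratio is $o(N)$. The paper avoids the sharp bound by combining the crude mixture bound $C_0N$, a window $|z|\le J_N=4\log N/\log\log N$, and a Chernoff tail.
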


The proof is in Appendix~\ref{app:sparse}. The main technical difficulty is not pointwise Poisson convergence, which is routine, but convergence of the adjacent-mass ratios in \eqref{eq:AB}: these are ratios of probabilities of neighbouring discrepancies, and they are exactly what a reputational comparison of two adjacent reports asks for. The appendix supplies weighted uniform-integrability and posterior-expansion lemmas that control the finite-support boundaries. The universal coefficients in \eqref{eq:Gammaexpansion} show that the local $5/4$ overshoot is not an artefact of a particular signal quality.\footnote{As $c\downarrow0$ the composition advantage approaches $5/4$ and initially declines, because a single defection is then genuinely exceptional and the market's reassignment problem is highly sensitive to it. As $c\to\infty$ its marginal incentive advantage over secrecy converges to zero. The theorem does not assert global monotonicity of $\Gamma$.}

\begin{theorem}\label{thm:sparsebranch}
There exists $\eta>0$ such that, for every
$\kappa\in(4/5,4/5+\eta)$ with $\kappa<1-d^2$, the equation
\begin{equation}
 \kappa\Gamma(c^*,Q)=1
 \label{eq:cstar}
\end{equation}
has a unique solution in a neighborhood of zero. It satisfies
\begin{equation}
 c^*(\kappa)=25\left(\kappa-\frac45\right)
 +O\!\left((\kappa-\tfrac45)^2\right).
 \label{eq:cstarexpansion}
\end{equation}
There is a sequence of finite-panel regular composition equilibria $x_N^*$ with
\begin{equation}
 Nx_N^*\longrightarrow c^*(\kappa).
 \label{eq:xstar}
\end{equation}
The positive fixed point of the limiting rescaled best-response map is locally stable.
\end{theorem}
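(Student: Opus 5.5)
Three steps, in order: solve the limiting equation \eqref{eq:cstar} by the implicit function theorem, lift the limiting root to finite panels by a sign-change argument in the rescaled variable, and read off stability from the slope of the limiting map.

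\emph{Step 1: the limiting root.} By Theorem~\ref{thm:sparsemultiplier}, $\Gamma(\cdot,Q)$ extends to a $C^1$ function on $[0,\infty)$ with $\Gamma(0,Q)=5/4$ and $\Gamma_c(0,Q)=-1/16$. Define
\[
H(c,\kappa)=\kappa\Gamma(c,Q)-1 .
\]
At $(0,4/5)$ we have $H=0$ and $H_c=-\tfrac{4}{5}\cdot\tfrac{1}{16}=-\tfrac1{20}\neq0$. The implicit function theorem then gives a unique $C^1$ solution $c^*(\kappa)$ near $\kappa=4/5$, with
\[
c^{*\prime}(4/5)=-\frac{H_\kappa}{H_c}=\frac{5/4}{1/20}=25 .
\]
This is \eqref{eq:cstarexpansion}. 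Since $c^{*\prime}>0$, we get $c^*>0$ for $\kappa>4/5$, and the restriction $\kappa<1-d^2$ simply carries over from Theorem~\ref{thm:creation}. One subtlety: the $O((\kappa-4/5)^2)$ remainder needs $\Gamma$ to be $C^2$, or at least $\Gamma=5/4-c/16+O(c^2)$. The expansion \eqref{eq:Gammaexpansion} supplies exactly the latter, and that suffices because we can invert the relation $\kappa(5/4-c/16+O(c^2))=1$ directly.

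\emph{Step 2: finite-panel equilibria.} Rescale the composition best response as
\[
\phi_N(c)=N\,T_{\Comp,N}(c/N)=N\rho f\,\Delta_{\Comp,N}(c/N),
\]
which is unsaturated because its argument is $O(N^{-1})$. By \eqref{eq:sparsedelta}, and using $4md=\chi$,
\[
\phi_N(c)=\kappa c\,\Gamma(c,Q)+o(1),
\]
uniformly on compact subsets of $(0,\infty)$. Fix a small $\epsilon>0$ such that $[c^*-\epsilon,c^*+\epsilon]\subset(0,\infty)$ and $H(c,\kappa)$ is strictly positive at $c^*-\epsilon$ and strictly negative at $c^*+\epsilon$. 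This is possible because $H_c<0$ near the root. Then $\phi_N(c)-c=c\,H(c,\kappa)+o(1)$ changes sign on this interval for all large $N$. Since $\Delta_{\Comp,N}$ is continuous (it is a finite sum of rational functions with positive denominators for $x\in(0,1]$), the intermediate value theorem gives a fixed point $c_N\in(c^*-\epsilon,c^*+\epsilon)$, and $x_N^*=c_N/N$ is an interior regular equilibrium. Letting $\epsilon\downarrow0$ along a diagonal sequence gives $Nx_N^*\to c^*(\kappa)$.

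\emph{Step 3: stability and the main obstacle.} The limiting rescaled map is $\psi(c)=\kappa c\Gamma(c,Q)$, with
\[
\psi'(c^*)=\kappa\Gamma(c^*)+\kappa c^*\Gamma_c(c^*)=1+\kappa c^*\Gamma_c(c^*) .
\]
Because $\Gamma_c=-1/16+O(c)$, this equals $1-\tfrac{\kappa c^*}{16}+O(c^{*2})$, which lies in $(0,1)$ for small $c^*>0$. So $c^*$ is locally stable. The main obstacle is Step 2. It requires the uniform convergence of Theorem~\ref{thm:sparsemultiplier} to be strong enough to keep the sign change, but only on a compact set bounded away from zero, and that is available. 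We deliberately do not claim uniqueness of the finite-$N$ root, and we avoid needing derivative convergence of $\phi_N$ by stating stability only for the limiting map, as the theorem does.
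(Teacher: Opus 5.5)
Your proposal is correct and follows the paper's proof essentially step for step. It uses the same local root from $\Gamma(0,Q)=5/4$ and $\Gamma_c(0,Q)=-1/16$, the same finite-panel transfer via uniform convergence of $NT_{\Comp,N}(c/N)-c$ to $c\{\kappa\Gamma(c,Q)-1\}$ on a bracket with strict endpoint signs, and the same stability criterion $1+\kappa c^*\Gamma_c(c^*,Q)\in(0,1)$. The differences are cosmetic. You apply the two-sided implicit function theorem at the boundary point $c=0$, which tacitly needs a $C^1$ extension of $\Gamma$ to $c<0$ (a linear extension suffices), whereas the paper argues one-sidedly by monotonicity. Your diagonal selection can also be dropped, because any root in a fixed bracket must converge to $c^*$, the unique zero of the uniform limit there.
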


\begin{proof}
The continuously differentiable extension in Theorem~\ref{thm:sparsemultiplier} has $\Gamma(0,Q)=5/4$ and $\Gamma_c(0,Q)=-1/16$. Hence $\Gamma$ is strictly decreasing on a sufficiently small right neighborhood of zero. For $\kappa>4/5$ sufficiently close to $4/5$, continuity and monotonicity give a unique nearby positive solution to \eqref{eq:cstar}; substituting \eqref{eq:Gammaexpansion} gives \eqref{eq:cstarexpansion}. This is the one-sided version of the usual implicit-function argument.

For the finite-panel transfer, define
\[
 \Psi_N(c)=N T_{\Comp,N}(c/N)-c.
\]
On every compact subset of $(0,\infty)$ the uniform cdf is unsaturated for large $N$, and Theorem~\ref{thm:sparsemultiplier} gives
\begin{equation}
 \Psi_N(c)=c\{\kappa\Gamma(c,Q)-1\}+o(1)
 \label{eq:rescaledroot}
\end{equation}
uniformly. The limiting expression crosses zero strictly at $c^*$ because its derivative there is $\kappa c^*\Gamma_c(c^*,Q)<0$. Choose a compact bracket around $c^*$ on which the endpoint signs are strict. Uniform convergence and the intermediate-value theorem then give exact finite-$N$ roots in that bracket and imply \eqref{eq:xstar}. At the limiting fixed point, the derivative of $c\mapsto\kappa c\Gamma(c,Q)$ equals $1+\kappa c^*\Gamma_c(c^*,Q)$, which lies in $(0,1)$ for $\kappa$ sufficiently close to $4/5$.
\end{proof}

The theorem deliberately asserts stability only for the limiting rescaled map. A finite-$N$ stability claim would require $C^1$ convergence of the sparse incentive function rather than convergence of its level, and we have not established that. It is unnecessary for existence and for localising the branch, and we prefer to say so rather than to assert more than the proof delivers.

\subsection{State learning on the sparse branch}

Before verification, let $D_N=R_N-K$. Under the positive state and any sequence with $Nx_N\to c>0$,
\begin{equation}
 D_N\Rightarrow D=V-U,
 \label{eq:Dlimit}
\end{equation}
with $(V,U)$ as in \eqref{eq:poissonpair}; under the negative state the limit is $-D$. The Skellam identity
\[
 \frac{\varpi_z}{\varpi_{-z}}=
 \left(\frac{Q}{1-Q}\right)^z
\]
implies the limiting log-likelihood ratio
\begin{equation}
 \log\frac{\Prb(D=z\mid\theta=+1)}
 {\Prb(D=z\mid\theta=-1)}
 =z\log\frac{Q}{1-Q}.
 \label{eq:skellamLLR}
\end{equation}
The same limit follows from the exact likelihood ratio \eqref{eq:exactLLR}: since $\log a_{x_N}=-dx_N+O(x_N^2)$ and $N-2K-D_N=O_{\Prb}(\sqrt N)$, its second term is $o_{\Prb}(1)$. The disclosed composition thus functions as a benchmark against which the report profile is read, and the whole of the surviving evidence is carried by the signed distance between the two.

\begin{theorem}\label{thm:aggregation}
Under composition disclosure, let $x_N$ be any sequence such that $Nx_N\to c>0$. The rule that chooses $+1$ if $R_N>K$, chooses $-1$ if $R_N<K$, and randomises on equality is asymptotically optimal. Its limiting accuracy is
\begin{align}
 \mathcal A_{\Comp}(c)
 &=\Prb(V-U>0\mid\theta=+1)
 +\frac12\Prb(V-U=0\mid\theta=+1)\nonumber\\
 &=\frac12+\frac14\sum_{z\in\mathbb Z}
 |\varpi_z-\varpi_{-z}|>\frac12.
 \label{eq:accuracy}
\end{align}
Under hidden disclosure along the same sequence $x_N$, optimal accuracy converges to $1/2$.
\end{theorem}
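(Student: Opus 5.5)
The argument has three parts: a sufficient-statistic reduction to $D_N=R_N-K$, a Poisson limit theorem combined with the Skellam likelihood ratio for the composition regime, and a total-variation argument for secrecy.

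\textbf{Reduction.} By Theorem~\ref{thm:blackwell} and \eqref{eq:exactLLR}, the exact log-likelihood ratio under $\Comp$ is
\[
D_N\log\lambda+(N-2K-D_N)\log a_{x_N}.
\]
Since $\log a_{x_N}=-dx_N+O(x_N^2)=O(1/N)$ and $N-2K=O_{\Prb}(\sqrt N)$ under both states (as $K\sim\Bin(N,1/2)$ independently of $\theta$), while $D_N$ is tight, the second term is $o_{\Prb}(1)$. So the exact LLR equals $D_N\log\lambda+o_{\Prb}(1)$.

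\textbf{Composition regime.} I would establish \eqref{eq:Dlimit} with a Poisson limit theorem.
\begin{itemize}
\item Condition on $\theta=+1$. Among the $K$ aligned experts, those reporting $-1$ number $\Bin(K,x_N(1-Q))$. Among the $N-K$ opposed experts, those reporting $+1$ number $\Bin(N-K,x_NQ)$.
\item Since $K/N\to1/2$ almost surely, these counts converge jointly to independent $\Pois((1-Q)c/2)$ and $\Pois(Qc/2)$ limits, that is, to $U$ and $V$. Hence $D_N=V_N-U_N\Rightarrow V-U$.
\item By exchange symmetry the limit under $\theta=-1$ is $-(V-U)$.
\end{itemize}
Next consider decisions. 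The Bayes-optimal rule under equal priors picks the state with the larger likelihood. Its accuracy is $\tfrac12+\tfrac12\mathrm{TV}(\Prb_{+}^N,\Prb_{-}^N)$, where $\Prb_{\pm}^N$ is the law of the full observation in each state.
\begin{itemize}
\item Because the LLR is asymptotically a monotone function of $D_N$, the sign rule has asymptotically optimal accuracy.
\item To show this rigorously, I would bound the Bayes risk of any rule from below by a limit. The observation $(r,K)$ reduces to the sufficient pair $(K,R_N)$. The conditional law of $D_N$ given $K$ converges to the Skellam law uniformly for $K$ in a $\sqrt N$ window around $N/2$, which has probability tending to one.
\item Within that window, the LLR as a function of $(K,D_N)$ is $D_N\log\lambda+o(1)$ uniformly for bounded $D_N$. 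The total variation between the two conditional laws therefore converges to $\mathrm{TV}(\mathrm{Law}(D),\mathrm{Law}(-D))$, and the sign rule attains it.
\end{itemize}
Accuracy \eqref{eq:accuracy} then follows from weak convergence on the lattice $\mathbb Z$, where point masses converge. The first line is the probability that the rule is correct in the limit. The second line uses
\[
\sum_z(\varpi_z-\varpi_{-z})^+=\tfrac12\sum_z|\varpi_z-\varpi_{-z}|,
\]
which gives $\Prb(D>0)-\Prb(D<0)=\sum_{z>0}(\varpi_z-\varpi_{-z})$. Strict inequality follows from the Skellam identity: $\varpi_1/\varpi_{-1}=\lambda>1$ and $\varpi_1>0$ for $c>0$.

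\textbf{Secrecy regime.} Under $\Hid$ the observer sees only named reports. Each report is marginally $+1$ with probability
\[
\tfrac12\bigl(1-x(1-Q)\bigr)+\tfrac12xQ=\tfrac12\bigl(1+x(2Q-1)\bigr)
\]
under $\theta=+1$, and the mirror value under $\theta=-1$. The reports are i.i.d. across experts given $\theta$ because orientations are independent and unobserved. So the experiment is $N$ i.i.d.\ Bernoulli$\bigl(\tfrac12(1\pm dx_N)\bigr)$ draws.
\begin{itemize}
\item The Hellinger distance per coordinate is of order $(dx_N)^2$. The total squared Hellinger distance is therefore $O(Nx_N^2)=O(c^2/N)\to0$.
\item Total variation is bounded by Hellinger, so it also tends to zero, and optimal accuracy is $\tfrac12+\tfrac12\mathrm{TV}\to\tfrac12$.
\end{itemize}

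\textbf{Main obstacle.} The delicate step is the optimality claim in the composition regime, namely controlling the total variation of the finite-$N$ experiment uniformly rather than only pointwise. My approach is to condition on $K$ in the $\sqrt N$ window and use uniform local Poisson approximation, via Le Cam's inequality, for each binomial component. The window's complement contributes $o(1)$ to the total variation.
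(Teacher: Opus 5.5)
Your proposal is correct and follows essentially the paper's route: Poisson/Skellam convergence of $D_N=R_N-K$, the exact likelihood ratio \eqref{eq:exactLLR} being $D_N\log\lambda+o_{\Prb}(1)$ so that the sign rule attains the limiting Bayes risk, positivity from the Skellam identity, and a vanishing product-divergence bound under secrecy (you use Hellinger where the paper uses KL with Pinsker, which is inessential). Your $\sqrt N$-window and Le Cam argument spells out the paper's terse ``bounded convergence'' step; note only that equating $\sum_{z>0}(\varpi_z-\varpi_{-z})$ with $\tfrac12\sum_z|\varpi_z-\varpi_{-z}|$ uses $\varpi_z\geq\varpi_{-z}$ for $z>0$, which again comes from the Skellam identity.
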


\begin{proof}
Poisson convergence and \eqref{eq:skellamLLR} apply whenever $Nx_N\to c$. The exact likelihood ratio \eqref{eq:exactLLR} differs from $D_N\log\lambda$ by $o_{\Prb}(1)$, including on $D_N=0$; bounded convergence therefore shows that the sign rule attains the limiting Bayes risk. Since $Q>1/2$ and $c>0$, the reflected Skellam laws differ, so their total variation distance is positive. Under $\Hid$, each report agrees with the state with probability $1/2+dx_N/2$. The KL divergence between the two $N$-fold report laws is $O(Nx_N^2)=O(N^{-1})$; Pinsker's inequality therefore gives vanishing total variation.
\end{proof}

Combining Theorems~\ref{thm:blackwell}, \ref{thm:sparsebranch}, and \ref{thm:aggregation} gives the decision-quality logic of the paper in three steps. At a fixed fidelity, composition and attribution contain exactly the same state information. Endogenising fidelity can nevertheless make composition strictly more informative on the tremble-robust branch. The reason is not that anonymity adds evidence, but that anonymous composition produces the rare reports that attribution suppresses---and then, having disclosed the benchmark, makes them legible.\footnote{It is worth noting what this does \emph{not} say. Composition disclosure is not shown to dominate attribution at every parameter configuration; outside the region \eqref{eq:openregion} both regimes may support informative equilibria and the comparison becomes quantitative rather than qualitative. The claim is that there is an open set of environments on which the extreme regimes are silent and the intermediate one is not within the regular symmetric-cutoff family.}

\subsection{Tremble robustness}\label{subsec:selection}

The informative equilibrium under $\Comp$ coexists with pooling. We therefore ask what survives a small perturbation rather than impose a global equilibrium-selection claim. For $\eps>0$, suppose that, conditional on $s_i=-b_i$, every expert is forced to follow her signal with probability $\eps$ before making the cost-based choice. The perturbed best response is
\begin{equation}
 T_{D,N}^{\eps}(x)=\eps+(1-\eps)
 F\!\left(\rho\Delta_{D,N}(x)\right).
 \label{eq:perturbedBR}
\end{equation}

For fixed $N$, let $\mathcal A_{D,N}(x)$ be the observer's optimal state-classification accuracy under regime $D$ when experts use fidelity $x$.

\begin{theorem}\label{thm:trembleaccuracy}
Suppose costs satisfy \eqref{eq:uniform}, \eqref{eq:openregion} holds, and $N\geq N_0(\kappa)$. Then:
\begin{enumerate}[label=(\roman*),leftmargin=2.2em]
 \item every perturbed fixed point under $\Hid$ or $\Ind$ satisfies
 \begin{equation}
   x\leq\frac{\eps}{1-(1-\eps)\kappa/(1-d^2)}
   \longrightarrow0;
   \label{eq:extremetremblebound}
 \end{equation}
 \item there are $\delta_N>0$ and $\bar\eps_N>0$ such that, for every $\eps\in(0,\bar\eps_N)$, a perturbed composition equilibrium exists and every such equilibrium satisfies $x\geq\delta_N$;
 \item there is $\omega_N>0$ such that every perturbed composition equilibrium has accuracy at least $1/2+\omega_N$ for all sufficiently small $\eps$, whereas accuracy at every perturbed secrecy or attribution equilibrium converges uniformly to $1/2$ as $\eps\downarrow0$.
\end{enumerate}
\end{theorem}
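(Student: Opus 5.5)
Part (i) is a direct linear bound. Under $\Hid$ and $\Ind$, Propositions~\ref{prop:hidden}--\ref{prop:individual} give $f\rho\Delta_D(x)=\kappa x/\text{den}_D(x)$. As in the proof of Proposition~\ref{prop:pooling}, each denominator is at least $1-d^2$ on $[0,1]$. Hence $F(\rho\Delta_D(x))\le \kappa x/(1-d^2)<1$, so the cdf never saturates. A perturbed fixed point therefore satisfies
\[
 x\le\eps+(1-\eps)\frac{\kappa}{1-d^2}\,x .
\]
Rearranging, using $\kappa/(1-d^2)<1$, gives \eqref{eq:extremetremblebound}. The bound is uniform in $N$ and in the equilibrium selected.

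Part (ii) treats $N\ge N_0(\kappa)$ as fixed. Write $\phi(x)=T_{\Comp,N}(x)-x$, which is continuous on $[0,1]$. By Theorem~\ref{thm:multiplier} and regular posteriors, $\phi(0)=0$ and $\phi'(0)=\kappa g_N-1>0$. As in the proof of Theorem~\ref{thm:creation}, $\phi(1)=\kappa/(1-d^2)-1<0$. The perturbed map satisfies
\[
 T^\eps(x)-x=\phi(x)+\eps\bigl(1-T_{\Comp,N}(x)\bigr),
\]
and the correction lies in $[0,\eps]$.

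Existence is by the intermediate value theorem, since $T^\eps(0)-0\ge\eps>0$ and $T^\eps(1)-1<0$ once $\eps<1-\kappa/(1-d^2)$.

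The lower bound comes from choosing $\delta_N>0$ small enough that $\phi(x)\ge \tfrac12(\kappa g_N-1)x$ on $(0,\delta_N]$, using the right-derivative at zero. On that interval $T^\eps(x)-x>0$, so every fixed point has $x>\delta_N$. The step I expect to need care is exactly this one-sided expansion. It requires $\Delta_{\Comp,N}$ to be $C^1$ from the right at $0$ under regular posteriors, not merely to have a derivative defined as a limit. I would get this from \eqref{eq:deltaCexact}: it is a finite sum of rational functions of $x$, and their denominators stay positive as $x\downarrow0$ because the regular limits of the posteriors exist. The function is then analytic on a right neighbourhood of $0$ and extends continuously to $0$. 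I would take $\bar\eps_N=1-\kappa/(1-d^2)$ or any smaller positive number.

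Part (iii) works on the compact set $[\delta_N,1]$. For fixed $N$, the accuracy $\mathcal A_{\Comp,N}(x)$ equals $\tfrac12+\tfrac12\mathrm{TV}$ between the two state laws of $(K,R_N)$. Here I use Theorem~\ref{thm:blackwell} to work with $(K,R_N)$ alone. Those laws are finite mixtures of binomials, given by \eqref{eq:Rplus}--\eqref{eq:Rminus}, so the total variation is continuous in $x$.

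For $x>0$ the laws differ. By \eqref{eq:minimalLLR}, the likelihood ratio is nonconstant because $\lambda a_x\neq1$ when $x<1$. At $x=1$ the reports are the signals, which are informative since $Q>1/2$. A positive continuous function on a compact set has a positive minimum, and I set $\omega_N$ to be half of it.

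For the extreme regimes, part (i) forces every equilibrium to satisfy $x\le C\eps$ with $C$ independent of the equilibrium. Under $\Hid$ and under $\Ind$, accuracy at $x=0$ is exactly $1/2$: reports equal orientations, which are independent of $\theta$. Accuracy is continuous in $x$ on finite alphabets. So $\sup_{x\le C\eps}\mathcal A_{D,N}(x)\to\tfrac12$, which is the uniform convergence claimed.
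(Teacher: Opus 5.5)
Your proposal is correct and follows the paper's proof essentially step for step: the linear bound $T_{D,N}(x)\le\kappa x/(1-d^2)$ for part (i), a right-neighbourhood of zero on which the perturbed composition best response strictly exceeds $x$ together with the intermediate value theorem for part (ii), and continuity of optimal accuracy on the compact set $[\delta_N,1]$ (and near $x=0$ for the extreme regimes) for part (iii). Your worry about needing $C^1$ behaviour at zero is unnecessary: since $T_{\Comp,N}(0)=0$, the existence of the right derivative $\kappa g_N>1$ from Theorem~\ref{thm:multiplier} already gives $T_{\Comp,N}(x)\ge(1+\eta_N)x$ on some $(0,\delta_N]$, which is all the paper uses. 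Your decomposition $T^\eps(x)-x=\phi(x)+\eps\bigl(1-T_{\Comp,N}(x)\bigr)$ also makes the paper's restriction $\eps<\eta_N/(1+\eta_N)$ unnecessary.
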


\begin{proof}
For either extreme regime, Propositions~\ref{prop:pooling} and~\ref{prop:individual} imply
$T_{D,N}(x)\leq rx$, where $r=\kappa/(1-d^2)<1$. Any fixed point of \eqref{eq:perturbedBR} therefore satisfies \eqref{eq:extremetremblebound}.

For composition disclosure, $T'_{\Comp,N}(0)=\kappa g_N>1$. Hence there are $\delta_N,\eta_N>0$ such that
$T_{\Comp,N}(x)\geq(1+\eta_N)x$ on $(0,\delta_N]$. If $\eps<\eta_N/(1+\eta_N)$, then $T_{\Comp,N}^{\eps}(x)>x$ throughout $[0,\delta_N]$. Moreover,
\[
 T_{\Comp,N}^{\eps}(1)
 =\eps+(1-\eps)\frac{\kappa}{1-d^2}<1.
\]
Continuity gives a fixed point in $(\delta_N,1)$ and rules out every fixed point below $\delta_N$.

At $x=0$, the report experiment is state-independent under both extreme regimes; continuity and part~(i) give their accuracy limits. Under composition disclosure, the two state distributions differ at every $x>0$: the likelihood ratio in \eqref{eq:exactLLR} is nonconstant. Optimal binary-classification accuracy is continuous in $x$, so its minimum on the compact interval $[\delta_N,1]$ exceeds one half. This proves part~(iii).
\end{proof}

The result selects an accuracy comparison for each fixed qualifying $N$, not a unique composition equilibrium or a bound uniform in panel size. In the unperturbed game, pooling can still coexist.

\begin{corollary}\label{cor:sparsetremble}
Under the assumptions of Theorem~\ref{thm:sparsebranch}, let $\eps_N=o(1/N)$. There are perturbed composition equilibria $x_N^{\eps_N}$ such that $Nx_N^{\eps_N}\to c^*$.
\end{corollary}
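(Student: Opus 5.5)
The plan is to repeat the rescaled root-finding argument from the proof of Theorem~\ref{thm:sparsebranch}, now with the perturbed best response \eqref{eq:perturbedBR} in place of $T_{\Comp,N}$. The finite-panel fixed-point condition $x=T^{\eps_N}_{\Comp,N}(x)$ will be rewritten in the sparse scale $x=c/N$. To that end, define
\[
 \Psi^{\eps_N}_N(c)=N\,T^{\eps_N}_{\Comp,N}(c/N)-c
 =N\eps_N+(1-\eps_N)\,N\,F\!\left(\rho\Delta_{\Comp,N}(c/N)\right)-c .
\]
A zero of $\Psi^{\eps_N}_N$ in $(0,N)$ is exactly a perturbed composition equilibrium $x=c/N$.

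Fix a compact bracket $[c_-,c_+]\subset(0,\infty)$ around $c^*(\kappa)$. This is the same bracket used in the proof of Theorem~\ref{thm:sparsebranch}: on it the limiting function $c\{\kappa\Gamma(c,Q)-1\}$ has strict opposite signs at the endpoints, positive at $c_-$ and negative at $c_+$, because it crosses zero transversally at $c^*$. The next step uses Theorem~\ref{thm:sparsemultiplier}, specifically the uniform expansion \eqref{eq:sparsedelta}, together with the unsaturated uniform cdf. These give $N F(\rho\Delta_{\Comp,N}(c/N))=\kappa c\Gamma(c,Q)+o(1)$ uniformly on the bracket, which is exactly \eqref{eq:rescaledroot} before the perturbation. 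The two perturbation terms will then be absorbed. First, $N\eps_N\to0$ by the hypothesis $\eps_N=o(1/N)$. Second, $\eps_N\, N F(\cdots)$ equals $\eps_N$ times a quantity bounded uniformly on the bracket, so it also vanishes. Hence $\Psi^{\eps_N}_N(c)=c\{\kappa\Gamma(c,Q)-1\}+o(1)$ uniformly on $[c_-,c_+]$.

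For large $N$ the endpoint signs of $\Psi^{\eps_N}_N$ therefore match those of the limit. Continuity of $\Psi^{\eps_N}_N$ in $c$ (each $\Delta_{\Comp,N}$ is a finite sum of rational functions, continuous on $(0,1]$) and the intermediate value theorem then give a root $c_N$ in the bracket. Setting $x_N^{\eps_N}=c_N/N$ produces the perturbed equilibria. To get $Nx_N^{\eps_N}\to c^*$ rather than mere boundedness, I would shrink the bracket. For each $\delta>0$, take $[c^*-\delta,c^*+\delta]$. The sign change persists on every such bracket because $c^*$ is the unique nearby root and the crossing is strict. A diagonal argument over $\delta\downarrow0$ then yields roots converging to $c^*$.

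The substantive work is all inherited from Theorem~\ref{thm:sparsemultiplier}. The only new point to check is that the tremble enters the rescaled map at order $N\eps_N$. This is why the condition is $\eps_N=o(1/N)$: with $\eps_N\asymp 1/N$ the limiting equation would acquire an additive constant and shift $c^*$. The main obstacle is therefore modest. It consists in making sure the uniform $o(1)$ remainder from \eqref{eq:sparsedelta} survives multiplication by $(1-\eps_N)$, and that the cdf stays in its linear part on the bracket. Both follow because $\rho\Delta_{\Comp,N}(c/N)=O(1/N)$ uniformly there.
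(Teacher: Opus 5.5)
Your proposal is correct and follows the paper's proof: it rescales the perturbed best response as $N\eps_N+(1-\eps_N)NT_{\Comp,N}(c/N)-c$, uses $N\eps_N\to0$ and \eqref{eq:rescaledroot} to get uniform convergence to $c\{\kappa\Gamma(c,Q)-1\}$ on a bracket, and applies the strict-crossing and intermediate-value argument from Theorem~\ref{thm:sparsebranch}. The only difference is minor. The paper skips your shrinking-bracket diagonal step: the limit function has $c^*$ as its unique zero in the bracket, so uniform convergence already forces every selected root, multiplied by $N$, to converge to $c^*$.
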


\begin{proof}
For $c$ in a compact bracket around $c^*$,
\[
 N T_{\Comp,N}^{\eps_N}(c/N)-c
 =N\eps_N+(1-\eps_N)N T_{\Comp,N}(c/N)-c.
\]
By $N\eps_N\to0$ and \eqref{eq:rescaledroot}, the right-hand side converges uniformly to $c\{\kappa\Gamma(c,Q)-1\}$. The strict-crossing argument from Theorem~\ref{thm:sparsebranch} supplies perturbed roots in the bracket, and every such selected sequence converges to $c^*$ after multiplication by $N$.
\end{proof}

\section{Discussion}\label{sec:discussion}

\subsection{Institutional interpretation}

The model's central policy is a three-part protocol: commit publicly to a disclosure rule, collect sealed named reports, and then release the prescribed conflict statistic after filing. Three features of it matter, and each corresponds to an assumption that is doing real work.

First, the orientation must be directionally auditable. A formal constituency, a pre-registered model family, or a financial tie governed by an ex ante classification rule fits more naturally than an unverifiable private belief. The measurement and review rule must be credible and commonly known before reports are filed; the model takes that institutional input as given. The named classifications must also remain confidential until the committed release. If affiliations already reveal every expert's direction, an auditor cannot manufacture anonymity by withholding the mapping.

Second, orientation must be issue-specific. A pharmaceutical tie may orient an expert differently on approval, on pricing, and on safety. The policy should therefore publish the composition relevant to the decision at hand, and not a permanent moral label attached to the expert's name.

Third, reports must be locked before the composition is released. In a scientific advisory committee, members could submit signed probability assessments before the secretariat publishes the number of members whose classified ties point toward approval or rejection. In a forecasting organisation, analysts could file named directional forecasts before an auditor publishes the number whose institutional mandates or pre-registered model classes favour each outcome. Directionally auditable and eventually scorable forecasting tasks fit the model most literally; drug, environmental, and financial applications motivate the mechanism but need not have perfectly binary or quickly verified states. These examples are not assertions that current rules already implement composition-only disclosure.

The policy also has a privacy reading. An institution may hold a rich conflict register, while the model uses only its binary, issue-specific classifications. Publishing those classifications by name reveals a mapping with consequences outside the decision for which it was collected. Composition disclosure releases the state-sufficient count while protecting that binary mapping. In the baseline, this protection is not purchased with less state information, because Theorem~\ref{thm:blackwell} gives exact equivalence---a rare case in which a privacy constraint is free at the level of the decision problem. Its behavioural value comes from something else: it preserves responsibility while withholding individualised excuses.

Like aggregate releases in official statistics, the rule publishes a count rather than microdata. Unusually, it anonymises an \emph{inducement} rather than an outcome, and the anonymity strengthens individual incentives.

\subsection{What the mechanism is---and is not}

The baseline mechanism is not a Keynesian beauty contest, and it does not ask experts to match the crowd; experts care about posterior ability once the state has been verified, and about nothing else. Nor is the paper making the claim that unknown ideological priors generally make communication truthful. The verification-free extension below does create a form of peer validation, but only because the audited aggregate constraint turns the rest of the named report profile into a benchmark. Without that constraint, anonymous heterogeneity need not help at all, and may hurt.

Cross-identification is the source of the composition-versus-attribution wedge, but it is not by itself the whole baseline equilibrium. The open-region and sparse-branch results also use heterogeneity in ability, privately costly orientation defiance, eventual verification of the state, and a cost distribution with positive density near zero. Remove any one of these and the fixed point changes; remove more than one and the sparse branch may disappear entirely. Proposition~\ref{prop:noverification} isolates what remains locally when verification alone is removed.

The binary structure is doing useful work rather than merely simplifying. It makes composition one-dimensional and delivers the exact factorisation in \eqref{eq:exactLLR}. Theorem~\ref{thm:alphabet} supplies the boundary for multiple states, reports, or conflict categories: additive likelihood separability is sufficient for anonymous histograms to remain lossless and, under full-support orientation uncertainty, necessary. When it holds, \eqref{eq:minimalscore} identifies the minimal state-lossless orientation statistic; when it fails under full support, some assignment information is genuinely valuable for learning the state.

\subsection{Peer validation without ground truth}\label{subsec:noverification}

Suppose now that the state is never publicly revealed to the reputation market. Reputations are formed from the named reports and the committed orientation disclosure alone; the observer's action adds no information beyond that public record. Let $\widehat\Delta_{D,N}(x)$ denote the reputational return to following a conflicting signal in this variant, under the same regular extension at pooling.

\begin{proposition}\label{prop:noverification}
All three regular incentive levels at pooling equal zero. Their right derivatives satisfy
\[
 \widehat\Delta'_{\Hid,N}(0)
 =\widehat\Delta'_{\Ind,N}(0)=0.
\]
Under composition disclosure,
\begin{equation}
 \widehat\Delta'_{\Comp,N}(0)=md f_N=\frac{\chi}{4}f_N,
 \qquad
 f_N=3-\frac8N+\frac{N+4}{N2^{N-1}}.
 \label{eq:noverification}
\end{equation}
The sequence satisfies $f_1=0$, is strictly increasing, is positive for every $N\geq2$, and converges to $3$. Thus large-panel composition disclosure retains three quarters of the verified-state secrecy slope even though neither extreme regime has any first-order return.
\end{proposition}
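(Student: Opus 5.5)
Without verification, reputation depends only on the public record. So for each regime we compute the posterior that the focal expert is type $H$ given named reports and disclosure only, and expand in $x$ around pooling.

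\emph{Hidden and attribution regimes.} Under $\Hid$ and $\Ind$, the focal likelihood factorises. Given the focal report (and $b_i$ under $\Ind$), peers' reports and orientations are independent of the focal type conditional on $\theta$. Integrating over $\theta$ still leaves a correlation channel: peers' reports are informative about $\theta$, and the focal report's likelihood under type $t$ is $\sum_\theta \tfrac12 P(\text{peers}\mid\theta)P^{b}_{q_t}(r_i\mid\theta,x)$. I would write $P^{b}_{q_t}(r\mid\theta,x)$ as a constant plus $x$ times a term linear in $q_t$, and note that peers' likelihood ratio across $\theta$ is itself $1+O(x)$. The type-dependent part of the focal likelihood is then $x(2q_t-1)\cdot(\text{posterior imbalance on }\theta)$, and the imbalance is $O(x)$. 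The posterior difference between $r_i=s_i$ and $r_i=b_i$ is therefore $O(x^2)$, plus $O(x)$ terms that are type-independent and cancel in the ratio. This gives zero levels and zero right derivatives. Under $\Ind$ the same argument applies with $b_i$ known, because conditioning on $b_i$ does not create type dependence at order $x$.

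\emph{Composition.} Under $\Comp$ the key difference is that $K$ couples the focal report to the peer profile at order one. At pooling, reports equal orientations, so $R_N=K$. A focal deviation produces a discrepancy $R_N-K=\pm1$ (or a compensating peer deviation) that must be allocated across identities. I would reuse the one-departure expansion from the proof of Theorem~\ref{thm:multiplier} (Appendix~\ref{app:finite}), replacing the verified-state conditioning with the average over $\theta$ given the public record. Concretely:
\begin{enumerate}[label=(\alph*),leftmargin=2em]
\item Condition on the event that at most one peer deviates.
\item Compute the focal likelihood $\Lambda_t(r,K,a)$ from \eqref{eq:focalL}, now summed over $\theta$ with the regular prior.
\item Expand the focal posterior \eqref{eq:muC} to first order in $x$.
\end{enumerate}
The surviving first-order term comes from the scenario in which the focal report $s_i=-b_i$ agrees with a single peer deviation that the market cannot place. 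The reassignment of the conflict between the focal expert and that peer gives weight $\propto q_t$ versus $1-q_t$ to the focal being correct relative to the peer's implied evidence. This produces a factor $m$ times the sign-agreement rate $d$ of peers with the state. The coefficient $f_N$ should arise from binomial expectations over the number of peer orientations aligned with the focal one, with terms like $\E[1/(k+1)]$ and $\E[1/(N-k)]$ evaluated in closed form via $\sum_k\binom{N-1}{k}/(k+1)=(2^N-1)/N$. This explains the $1/N$ and $2^{-(N-1)}$ structure. Checking $N=1$ gives $f_1=3-8+5=0$, consistent with $\Comp=\Ind$, which I will use as a sanity check.

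\emph{Main obstacle and remaining claims.} The main obstacle is the bookkeeping in this expansion: identifying every order-$x$ contribution, including off-path regular posteriors when the focal report alone creates $R_N\neq K$. The properties of $f_N$ are then routine. The limit $3$ is immediate. Monotonicity follows by showing $f_{N+1}-f_N=\frac{8}{N(N+1)}-\bigl(\tfrac{N+4}{N2^{N-1}}-\tfrac{N+5}{(N+1)2^N}\bigr)>0$, which after clearing denominators reduces to $2^{N+3}>(N+3)^2+\text{const}$-type inequalities checkable for small $N$ and then by induction. Positivity for $N\ge2$ follows from $f_1=0$ and strict increase. Finally, $\chi=4md$ gives the $\chi/4$ form and the $3/4$ comparison with $\Delta'_{\Hid}(0)=\chi$.
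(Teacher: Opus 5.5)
\paragraph{The main gap: composition.} The coefficient $f_N$ is the substance of the proposition, and you never derive it. The mechanism you single out would also not produce it. The first-order return is not carried only by histories in which the focal report coincides with a single peer departure.

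\begin{itemize}
\item \emph{No peer departs.} Condition on $J=j$ peer orientations aligned with the latent state, $k=n-j$. The focal's defiance alone creates the discrepancy $R_N-K=-1$. The induced posterior shift has zero level at pooling and slope $md\,[Q(2j-k)/(j+1)+(1-Q)(2k-j)/(k+1)]$. After averaging this contributes $md\,\E[(2J-(n-J))/(J+1)]$. That equals $md/12$ at $N=3$ and tends to $md$, one third of the limit.
\item \emph{One peer departs.} These histories have levels $2Qmd/(s(j+2))$ and $2(1-Q)md/(s(k+2))$, with $s=Q^2+(1-Q)^2$. Their probabilities are $kQx$ and $j(1-Q)x$. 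Together they give $2md\,\E[(n-J)/(J+2)]\to 2md$.
\end{itemize}

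The paper obtains this table as follows. Averaging over the latent state pairs each normalised history $(y,L,a)$ with its reflection $(1-y,N-L,n-a)$. This keeps the likelihood affine in $q$ and yields the explicit score~\eqref{eq:noverifyscore}. The symmetry $J\overset{d}{=}n-J$ and a binomial index shift then reduce everything to $4\E[J/(J+1)]-\E[J/(N-J)]$. Without such a table, neither the coefficient nor the vanishing of the composition level at pooling is established.

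\paragraph{A secondary gap: attribution.} Your argument for $\Hid$ is correct and is the paper's. With orientations integrated out, the focal likelihood is $\tfrac12[1+x(2q-1)r\theta]$, and every peer profile has state likelihood ratio $1+O(x)$.

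The argument does not transfer to $\Ind$. There the defiance likelihood $P_q^{b}(-b\mid\theta,x)\in\{x(1-q),xq\}$ has no order-one constant, so
\[
 \widehat\mu(-b)-p=\frac{m\,(L_{-b}-L_b)}{(1-Q)L_b+QL_{-b}}.
\]
On zero-departure peer histories this is roughly $-mxd(2k_+-n)$, where $k_+$ is the number of peers sharing the focal orientation. On single-departure histories it is of order one, because once peer orientations are public such a history has state likelihood ratio $Q/(1-Q)$ or its inverse. Your pointwise $O(x^2)$ claim is therefore false. The zero slope holds only because these terms cancel in expectation.

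The missing step is the exact iterated-expectations argument. With $b_i$ disclosed, the report $-b_i$ reveals the type-independent event $s_i=-b_i$, so the expected posterior after defiance is exactly $p$. For pandering, $\partial_q\overline\Lambda^{\Ind}_q=(x/2)(L_b-L_{-b})$. Here $L_b-L_{-b}$ is $O(x)$ relative to $\overline\Lambda^{\Ind}_Q$ on zero-departure histories, so the posterior shift there is $O(x^2)$. Single-departure histories have only probability $O(x)$.

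Once $f_N$ is established, your remaining steps are fine: the limit, monotonicity via $8\cdot 2^N>N^2+5N+8$, positivity from $f_1=0$, and the $\chi/4$ rewriting.
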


The proof is in Appendix~\ref{app:disclosure}. The force is intrinsically collective. With no ground truth, one isolated orientation-defying report cannot establish ability. Once the count is known, however, a rare departure elsewhere in the panel changes whether the focal report is interpreted as evidence or mere orientation. These first-order peer events survive the averaging over the unknown state. Under individual attribution, history-by-history inference can remain, but its ex ante first-order value cancels; secrecy removes the accounting identity that links the reports.

Proposition~\ref{prop:noverification} is deliberately a local incentive theorem; it does not establish a full communication equilibrium without ground truth. The verified-state environment remains the basis of the equilibrium-creation, sparse-branch, and accuracy results.

\subsection{Scope and robustness}

Corollary~\ref{cor:orientationlaw} already settles one natural concern: exact state-information equivalence is not an artefact of symmetric or independent orientations. Any commonly known, state-independent joint orientation law is admissible at a common fixed fidelity. The symmetry and exchangeability assumptions enter instead through incentives, where they reduce the focal posterior to a scalar count and deliver the closed-form multiplier $g_N$.

The local incentive ranking is also robust to the cost distribution.

\begin{proposition}\label{prop:costrobustness}
Suppose $F(0)=0$ and $F$ is right-differentiable at zero with density $f_0\in(0,\infty)$. Let $\kappa_0=\rho f_0\chi$. Then
\[
 T'_{\Ind,N}(0)=\frac{\kappa_0}{4},\qquad
 T'_{\Hid,N}(0)=\kappa_0,\qquad
 T'_{\Comp,N}(0)=\kappa_0g_N.
\]
Consequently, whenever $g_N>1$ and $1/g_N<\kappa_0<1$, pooling is locally stable under both extreme regimes and locally unstable under composition disclosure within the regular symmetric-cutoff family.
\end{proposition}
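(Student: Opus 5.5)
The plan is to establish that, for each regime $D$ and every $N$, the cost distribution enters $T_{D,N}$ only through the outer composition $T_{D,N}(x)=F(\rho\Delta_{D,N}(x))$. The return $\Delta_{D,N}$ itself is computed by Bayes' rule from report probabilities indexed by the fidelity $x$, and $F$ enters those probabilities only through $x=F(\bar v)$. Hence Propositions~\ref{prop:hidden} and~\ref{prop:individual} and Theorem~\ref{thm:multiplier} hold verbatim, with regular posteriors, for any $F$. In particular $\Delta_{D,N}(0)=0$ in all three regimes, and the right derivatives at zero are $\Delta'_{\Hid}(0)=\chi$, $\Delta'_{\Ind}(0)=\chi/4$ and $\Delta'_{\Comp,N}(0)=g_N\chi$. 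For $\Hid$ and $\Ind$ the return does not depend on $N$, so the $N$ subscript in the statement is harmless.

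Next I would apply a one-sided chain rule at zero. Write $u(x)=\rho\Delta_{D,N}(x)$, so that $u(0)=0$ and $u'(0^+)=\rho\Delta'_{D,N}(0)$. If $u'(0^+)>0$, then $u(x)>0$ for small $x>0$. Using $F(0)=0$ and right-differentiability of $F$ at zero,
\[
 \frac{T_{D,N}(x)}{x}=\frac{F(u(x))-F(0)}{u(x)}\cdot\frac{u(x)}{x}\longrightarrow f_0\,\rho\Delta'_{D,N}(0),
\]
and substituting $\kappa_0=\rho f_0\chi$ gives the three displayed slopes. One care point is the sign of $u$ near zero: since every right derivative is a strictly positive multiple of $\chi>0$ (because $g_N>0$ and $m,d>0$), $u$ approaches zero from above and only the right derivative of $F$ is needed. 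If one worried about $u$ vanishing at points accumulating at zero, the factorisation could be replaced by $F(u)=f_0u+o(u)$ together with $u(x)=O(x)$, which gives the same limit without dividing by $u$.

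Finally, the stability conclusions follow from $T_{D,N}(0)=0$ and the derivative comparison. Under $\Ind$ the slope is $\kappa_0/4<1/4<1$ and under $\Hid$ it is $\kappa_0<1$, so $T_{D,N}(x)<x$ on a right neighbourhood of zero and pooling is locally stable. Under $\Comp$, the assumption $\kappa_0>1/g_N$ gives a slope $\kappa_0g_N>1$, so $T_{\Comp,N}(x)>x$ near zero and pooling is locally unstable.

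The only real obstacle is justifying that Theorem~\ref{thm:multiplier}, which was stated under regular posteriors, does not use the uniform specification. I would verify this by checking that its appendix proof works entirely in $x$. This should be immediate, because the theorem was stated before \eqref{eq:uniform} was introduced.
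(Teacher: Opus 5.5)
Your proposal is correct and follows the paper's own argument: it applies the one-sided chain rule to $T_{D,N}(x)=F(\rho\Delta_{D,N}(x))$ at zero and uses the slopes from Propositions~\ref{prop:hidden}--\ref{prop:individual} and Theorem~\ref{thm:multiplier}. Your extra checks make explicit what the paper's one-line proof leaves implicit: the sign of $u$ near zero, which is automatic since $u'(0^+)>0$, and the fact that $\Delta_{D,N}$ does not depend on $F$.
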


\begin{proof}
Apply the right-hand chain rule to $T_{D,N}(x)=F(\rho\Delta_{D,N}(x))$ at zero and use Propositions~\ref{prop:hidden}--\ref{prop:individual} and Theorem~\ref{thm:multiplier}.
\end{proof}

Uniform costs are therefore used for the global claim that the extreme regimes have no other symmetric cutoff fixed point, not for the local $1:4:5$ mechanism. Beyond the baseline, asymmetric orientation laws generally change the incentive multiplier even though they leave the information theorem intact. Intrinsic concern for the observer's action and correlated signals would likewise change the behavioural fixed point. Multidimensional states or reports pose a separate informational question, for which Theorem~\ref{thm:alphabet} gives a sufficient likelihood-factorisation test that is also necessary under full-support orientation uncertainty.

\section{Conclusion}

Transparency does two distinct jobs in an expert panel. It helps an audience interpret what the panel has said, and it assigns explanations---or blame---to the individuals who said it. Full attribution performs both jobs at once, and because it performs them at once, the second job quietly undoes the first: the excuse it supplies is precisely what makes the evidence not worth filing.

In the model, the two jobs can be separated exactly. At any fixed interior reporting behaviour, composition disclosure retains every piece of state information available under individually attributed conflicts. Conditional on named reports remaining public, every lossless orientation disclosure must make the composition recoverable together with those reports. Beyond the binary model, additive likelihood separability is sufficient for anonymous histograms to remain lossless and, under full-support orientation uncertainty, necessary. Composition nevertheless changes reputational incentives, because the audience cannot attach the aggregate constraint to any particular expert. In large panels the residual cross-identification raises the local incentive slope above both secrecy and attribution, and creates an informative sparse branch on an open set where the two extreme regimes pool within the regular symmetric-cutoff family. For each fixed qualifying panel, a vanishing forced-fidelity perturbation sends every extreme-regime equilibrium to pooling while keeping every composition equilibrium bounded away from it.

Along the sparse branch, only order-one orientation-defying reports are filed---a handful in arbitrarily large panels---but the handful is enough, because the disclosed benchmark is what makes it readable. The same benchmark also produces positive local accountability when the state is never verified: rare departures validate one another through the audited count even though no external ground truth arrives.

The broader lesson is not that anonymity should replace transparency. It is that disclosure should be designed around the statistic that the social inference problem actually needs. When the assignment of that statistic serves mainly to supply private excuses, withholding the assignment can strengthen accountability without weakening aggregate learning---and an institution that publishes less about its members may end up knowing more about the world.

\appendix

\section{Proof of the finite-panel multiplier}\label{app:finite}

Normalise observations relative to the verified state:
\[
 e_j=\1\{b_j=\theta\},\qquad y_j=\1\{r_j=\theta\}.
\]
Let $n=N-1$ and let $J$ be the number of state-aligned peer orientations. Conditional on the state, $J\sim\Bin(n,1/2)$. Put $k=n-J$.

For a particular peer report vector containing $a$ correct reports, let $C_n(\ell,a;x)$ be the joint likelihood of that vector and exactly $\ell$ aligned peer orientations. Direct enumeration gives
\begin{align}
 2^nC_n(\ell,a;x)
 ={}&\sum_{u=\max\{0,\ell-(n-a)\}}^{\min\{\ell,a\}}
 \binom au\binom{n-a}{\ell-u}
 [1-x(1-Q)]^u(xQ)^{a-u}\nonumber\\
 &\qquad\times[x(1-Q)]^{\ell-u}
 [1-xQ]^{n-a-\ell+u}.
 \label{eq:Clocal}
\end{align}
Uniformly over feasible $a$,
\begin{align*}
 2^nC_n(a,a;x)
 &=1-x[a(1-Q)+(n-a)Q]+O(x^2),\\
 2^nC_n(a-1,a;x)
 &=aQx\{1-x[(a-1)(1-Q)+(n-a)Q]\}+O(x^3),\\
 2^nC_n(a+1,a;x)
 &=(n-a)(1-Q)x\{1-x[a(1-Q)+(n-a-1)Q]\}+O(x^3),\\
 2^nC_n(a-2,a;x)&=\binom a2Q^2x^2+O(x^3),\\
 2^nC_n(a+2,a;x)&=\binom{n-a}{2}(1-Q)^2x^2+O(x^3).
\end{align*}

Let $P_q^e(y)$ denote the focal report likelihood in normalised coordinates. For total aligned-orientation count $L$, define
\[
 \Lambda_q(y,L,a;x)=\frac12\left[
 P_q^0(y)C_n(L,a;x)+P_q^1(y)C_n(L-1,a;x)\right].
\]
This likelihood is affine in $q$, so $p\Lambda_h+(1-p)\Lambda_\ell=\Lambda_Q$. Moreover, $\partial_qP_q^e(y)=x(2y-1)$ for both orientations. Bayes' rule therefore yields the exact posterior-score identity
\begin{equation}
 \mu_y=p+mS_y,
 \qquad
 S_y(L,a;x)=
 \frac{x(2y-1)[C_n(L,a;x)+C_n(L-1,a;x)]}
 {P_Q^0(y)C_n(L,a;x)+P_Q^1(y)C_n(L-1,a;x)}.
 \label{eq:scoreidentity}
\end{equation}

Condition on $J=j$ and write $k=n-j$. For a focal orientation opposed to the state, put $R_o=S_1(j,a;x)-S_0(j,a;x)$; for an aligned focal orientation, put $R_a=S_0(j+1,a;x)-S_1(j+1,a;x)$. Substitution of the expansions above into \eqref{eq:scoreidentity} gives the complete one-departure calculation:
\begin{align*}
 a=j:\quad
 &R_o(0)=\frac1{Q(j+1)},
 &&R_o'(0)=1+\frac{j}{j+1},\\
 &R_a(0)=-\frac1{(1-Q)(k+1)},
 &&R_a'(0)=-1-\frac{k}{k+1};\\[.3em]
 a=j+1:\quad
 &R_o(0)=\frac2{Q(j+2)},
 &&R_a(0)=-\frac1{Q(j+2)};\\[.3em]
 a=j-1:\quad
 &R_o(0)=\frac1{(1-Q)(k+2)},
 &&R_a(0)=-\frac2{(1-Q)(k+2)}.
\end{align*}
No peer departs with probability $1-x\lambda_j+O(x^2)$, where $\lambda_j=j(1-Q)+kQ$. One opposed peer departs with probability $kQx+O(x^2)$ and one aligned peer departs with probability $j(1-Q)x+O(x^2)$.

Weighting the opposed focal case by $Q$ and the aligned case by $1-Q$ yields
\begin{align}
 \frac1m\left.\frac{\dd\Delta_{\Comp,N}(x\mid J=j)}{\dd x}\right|_{0+}
 ={}&Q\left(1+\frac{j}{j+1}\right)
 -(1-Q)\left(1+\frac{k}{k+1}\right)\nonumber\\
 &+\lambda_j\left(\frac1{k+1}-\frac1{j+1}\right)
 +\frac{k(3Q-1)}{j+2}
 +\frac{j(3Q-2)}{k+2}.
 \label{eq:conditionalderivative}
\end{align}
At level zero the conditional contribution equals $m[(j+1)^{-1}-(k+1)^{-1}]$, whose expectation vanishes by binomial symmetry. Hence $\Delta_{\Comp,N}(0)=0$ under the regular extension.

Taking expectations in \eqref{eq:conditionalderivative}, using $J\overset d=n-J$, and collecting terms gives
\begin{equation}
 \frac{\Delta'_{\Comp,N}(0)}m
 =d\E\!\left[
 1+\frac{2J}{J+1}+\frac{3J}{n-J+2}-\frac{J}{n-J+1}
 \right].
 \label{eq:derivativereduction}
\end{equation}
For $J\sim\Bin(N-1,1/2)$, binomial integration identities imply
\begin{align}
 \E\frac{J}{J+1}&=1-\frac2N+\frac{2^{1-N}}N,
 \label{eq:binid1}\\
 \E\frac{J}{N-J+1}&=1-\frac2N+\frac{2^{1-N}}N,
 \label{eq:binid2}\\
 \E\frac{J}{N-J}&=1-2^{1-N}.
 \label{eq:binid3}
\end{align}
The first identity follows by writing $J/(J+1)=1-(J+1)^{-1}$ and integrating $(1+t)^{N-1}$ on $[0,1]$. For the second, symmetry gives $\E[J/(N-J+1)]=\E[(n-J)/(J+2)]$, and a binomial index shift makes the latter equal to $\E[J/(J+1)]$. The third follows by the same change of index.

Since $\Delta'_{\Hid}(0)=4md$, equations \eqref{eq:derivativereduction}--\eqref{eq:binid3} give
\[
 g_N=\frac14\left(1+2\E\frac{J}{J+1}
 +3\E\frac{J}{N-J+1}-\E\frac{J}{N-J}\right)
 =\frac54-\frac{5}{2N}+\frac{N+5}{N2^{N+1}}.
\]
Finally,
\[
 g_{N+1}-g_N=
 \frac{5\,2^{N+1}-N^2-6N-10}{N(N+1)2^{N+2}}>0.
\]
The numerator is positive at $N=1$, and its forward difference is $5\,2^{N+1}-2N-7>0$. Direct substitution gives $g_9<1<g_{10}$, while $g_1=1/4$ and the limit is $5/4$. This proves Theorem~\ref{thm:multiplier}.

\section{Proofs for disclosure and peer validation}\label{app:disclosure}

\subsection{Minimality and finite alphabets}

\begin{proof}[Proof of Theorem~\ref{thm:minimaldisclosure}]
For $x\in(0,1)$, every report vector has positive conditional probability under every orientation vector in the maintained support. Moreover,
\[
 Q(1-xQ)-(1-Q)\{1-x(1-Q)\}=(2Q-1)(1-x)>0,
\]
so $\lambda a_x>1$. At any fixed $r$, the likelihood ratio in \eqref{eq:minimalLLR} is consequently injective in $K$.

Let $X=(r,b)$, $Y=(r,M)$, and let $L_X=\dd P_+^X/\dd P_-^X$. Because $Y$ is a garbling of $X$,
\[
 L_Y(Y)=\E_-[L_X(X)\mid Y].
\]
The data-processing identity for chi-squared divergence is
\[
 \chi^2(P_+^X\Vert P_-^X)-\chi^2(P_+^Y\Vert P_-^Y)
 =\E_-\!\left[\operatorname{Var}_-(L_X\mid Y)\right].
\]
Blackwell equivalence forces equality of the divergences, hence $L_X$ is measurable with respect to $Y$. At each fixed $r$, equation~\eqref{eq:minimalLLR} is injective in $K$, so there must be a measurable $k^*$ with $k^*(r,M)=K(b)$ almost surely under the induced joint law in either state. For an orientation-only kernel, every report vector lies in every supported orientation vector's full-support report law. Thus if message laws associated with two different compositions were not mutually singular, two different full-data likelihood ratios would occur over the same $(r,M)$ on a set of positive measure. Consequently $M$ itself must reveal $K$.

Conversely, if $(r,M)$ reveals $K$, the message experiment refines composition disclosure. Theorem~\ref{thm:blackwell} makes composition disclosure equivalent to full attribution, while $(r,M)$ remains a garbling of full attribution. The experiments are therefore equivalent.
\end{proof}

\begin{proof}[Proof of Theorem~\ref{thm:alphabet}]
The state-independent factor $\Pi_N(b)$ cancels from every full-data likelihood ratio. Under \eqref{eq:additivellr},
\begin{equation}
 \log\frac{\Prb_\theta(r,b)}{\Prb_{\theta_0}(r,b)}
 =\sum_i\alpha_\theta(r_i)+\sum_i\beta_\theta(b_i).
 \label{eq:aggregateadditive}
\end{equation}
The first sum is observed from the named reports and the second is a function of the orientation histogram. Every state likelihood ratio is therefore measurable with respect to $(r,\mathcal H(b))$. Conditional on these data, the missing assignment of orientations is state-independent and supplies the reverse Blackwell kernel.

For necessity, apply the conditional-variance identity from the preceding proof separately to each pair $(P_\theta,P_{\theta_0})$. Blackwell equivalence makes every likelihood ratio $\dd P_\theta/\dd P_{\theta_0}$ measurable with respect to the histogram experiment. Now fix $r,r'\in\mathcal R$ and $b,b'\in\mathcal B$, holding all other coordinates fixed when $N>2$. Swapping $b$ and $b'$ across the two named reports preserves the histogram. Full support and likelihood-ratio measurability give, for every $\theta\neq\theta_0$,
\[
 \ell_\theta(r,b)+\ell_\theta(r',b')
 =\ell_\theta(r,b')+\ell_\theta(r',b).
\]
Thus every two-by-two cross-difference of $\ell_\theta$ is zero. Fixing $r_0,b_0$ and defining
\[
 \alpha_\theta(r)=\ell_\theta(r,b_0),
 \qquad
 \beta_\theta(b)=\ell_\theta(r_0,b)-\ell_\theta(r_0,b_0)
\]
gives \eqref{eq:additivellr}.

Finally, \eqref{eq:aggregateadditive} depends on orientations exactly through the score vector $S_N$. Revealing that vector suffices. Conversely, the conditional-variance argument from the preceding proof makes every state likelihood ratio measurable with respect to $(r,M)$. At fixed $r$, distinct supported score vectors give distinct likelihood-ratio vectors, so $(r,M)$ must reveal $S_N$ on $\operatorname{supp}\Pi_N$. If the message is orientation-only, full support of reports further implies mutually singular message laws across distinct supported scores, and $M$ alone reveals the score.
\end{proof}

\subsection{No verification}

For clarity, put hats on posteriors formed without observing the state. Averaging \eqref{eq:focalL} over the state prior gives
\[
 \widehat\Lambda_t(r,K,a;x)=
 \frac12\sum_{\theta\in\{-1,+1\}}
 \Lambda_t(r,K,a\mid\theta,x),
 \qquad
 \widehat\mu(r,K,a;x)=
 \frac{p\widehat\Lambda_H}
 {p\widehat\Lambda_H+(1-p)\widehat\Lambda_L}.
\]
For the normalization $b_i=+1,s_i=-1$, the exact no-verification incentive is
\begin{align}
 \widehat\Delta_{\Comp,N}(x)
 ={}&\sum_{\theta}\Prb(\theta\mid s_i=-1)
 \sum_{k,a}P_{N-1,\theta}(k,a;x)\nonumber\\
 &\quad\times\left[
 \widehat\mu(-1,k+1,a;x)-
 \widehat\mu(+1,k+1,a;x)\right].
 \label{eq:noverificationexact}
\end{align}

\begin{proof}[Proof of Proposition~\ref{prop:noverification}]
Let $n=N-1$, and continue to use the state-normalised coordinates $(y,L,a)$ from Appendix~\ref{app:finite}. State averaging pairs a history with its reflection. Define
\[
 \overline\Lambda_q(y,L,a;x)=\frac12\left\{
 \Lambda_q(y,L,a;x)+\Lambda_q(1-y,N-L,n-a;x)
 \right\}.
\]
This likelihood is affine in $q$, so $\widehat\mu_y=p+m\widehat S_y$. Suppressing $n$ and $x$ in $C_n$ and taking $C_n(\ell,a;x)=0$ outside its feasible support, direct differentiation in $q$ gives
\begin{align}
 \widehat S_y(L,a;x)
 ={}&\frac{x(2y-1)}{D_y(L,a;x)}
 \bigl\{C(L,a)+C(L-1,a)\nonumber\\
 &\hspace{6em}-C(N-L,n-a)-C(N-L-1,n-a)\bigr\},
 \label{eq:noverifyscore}
\end{align}
where
\begin{align*}
 D_y(L,a;x)={}&P_Q^0(y)C(L,a)+P_Q^1(y)C(L-1,a)\\
 &+P_Q^0(1-y)C(N-L,n-a)
 +P_Q^1(1-y)C(N-L-1,n-a).
\end{align*}
This identity makes the latent-state pairing explicit rather than hiding it inside \eqref{eq:noverificationexact}.

Let $J\sim\Bin(n,1/2)$ be the number of peer orientations aligned with the latent state and write $k=n-j$. Conditional on $J=j$, put
\begin{align*}
 B_j(a;x)={}&Q\{\widehat S_1(j,a;x)-\widehat S_0(j,a;x)\}\\
 &+(1-Q)\{\widehat S_0(j+1,a;x)-\widehat S_1(j+1,a;x)\}.
\end{align*}
Let $s=Q^2+(1-Q)^2$. Substitution of the five local expansions in \eqref{eq:Clocal} into \eqref{eq:noverifyscore} gives the complete zero- and one-departure table:
\begin{align*}
 B_j(j;0)&=0,\\
 \partial_xB_j(j;0)
 &=d\left[
 Q\frac{2j-k}{j+1}+(1-Q)\frac{2k-j}{k+1}
 \right],\\
 B_j(j+1;0)&=\frac{2Qd}{s(j+2)},
 &B_j(j-1;0)&=\frac{2(1-Q)d}{s(k+2)}.
\end{align*}
The $j+1$ entry is used only when $k>0$, and the $j-1$ entry only when $j>0$. Conditional peer probabilities are, respectively,
\[
 1-x\{j(1-Q)+kQ\}+O(x^2),\qquad
 kQx+O(x^2),\qquad j(1-Q)x+O(x^2).
\]
Histories with two or more departures are $O(x^2)$. Since the zero-departure level in the table is zero, differentiation yields
\begin{align}
 \frac{\widehat\Delta'_{\Comp,N}(0\mid J=j)}{md}
 ={}&Q\frac{2j-k}{j+1}+(1-Q)\frac{2k-j}{k+1}\nonumber\\
 &+\frac{2kQ^2}{s(j+2)}+
 \frac{2j(1-Q)^2}{s(k+2)}.
 \label{eq:noverificationconditional}
\end{align}
Average \eqref{eq:noverificationconditional} and use $J\overset d=n-J$. The terms containing $Q$ combine, leaving
\[
 \frac{\widehat\Delta'_{\Comp,N}(0)}{md}
 =\E\left[
 \frac{2J-(n-J)}{J+1}+\frac{2(n-J)}{J+2}
 \right].
\]
The binomial index shift
$\E[(n-J)/(J+2)]=\E[J/(J+1)]$, together with symmetry
$\E[(n-J)/(J+1)]=\E[J/(N-J)]$, gives
\begin{equation}
 \frac{\widehat\Delta'_{\Comp,N}(0)}{md}
 =4\E\!\left[\frac{J}{J+1}\right]
 -\E\!\left[\frac{J}{N-J}\right].
 \label{eq:noverificationreduction}
\end{equation}
The two expectations are evaluated in \eqref{eq:binid1} and \eqref{eq:binid3}. Substitution yields
\[
 4\left(1-\frac2N+\frac{2^{1-N}}N\right)
 -\left(1-2^{1-N}\right)
 =3-\frac8N+\frac{N+4}{N2^{N-1}}=f_N,
\]
which proves \eqref{eq:noverification}.

It remains to verify the two zero slopes. Under secrecy, let $L_\theta(z;x)$ be the likelihood of a peer report profile $z$ after orientations are integrated out. The focal public likelihood without verification is
\[
 \overline\Lambda_q^{\Hid}(r,z;x)=\frac14\left\{
 [1+x(2q-1)r]L_{+}(z;x)
 +[1-x(2q-1)r]L_{-}(z;x)
 \right\}.
\]
At pooling, $L_+-L_-=O(x)$ uniformly over the finite set of $z$, so the component depending on $q$ is $O(x^2)$. Both intervened expected reputations are consequently $p+O(x^2)$.

Under attribution, condition on the disclosed peer orientations and again write $L_\theta(z;x)$ for the peer-data likelihood. A focal report $-b$ reveals the conflicting signal $s=-b$; because the reporting tremble and $v_i$ are independent of type and the marginal signal is type-independent, iterated expectations make its expected posterior exactly $p$. For a focal report $b$,
\[
 \overline\Lambda_q^{\Ind}(b,z;x)=\frac12\left\{
 [1-x(1-q)]L_b(z;x)+[1-xq]L_{-b}(z;x)
 \right\},
\]
and therefore $\partial_q\overline\Lambda_q^{\Ind}=(x/2)(L_b-L_{-b})$. On zero-departure peer histories the difference is $O(x)$, while one-departure histories have probability $O(x)$. The expected posterior after pandering is thus $p+O(x^2)$. Hence
$\widehat\Delta'_{\Hid,N}(0)=\widehat\Delta'_{\Ind,N}(0)=0$.
The displayed likelihoods and score table also show $\widehat\Delta_{D,N}(0)=0$ for every regime under the regular extension.

Finally,
\[
 f_{N+1}-f_N=
 \frac{8\,2^N-(N^2+5N+8)}{N(N+1)2^N}>0.
\]
The numerator is positive at $N=1$ and its forward difference is $8\,2^N-2N-6>0$ thereafter. The endpoint values, positivity, and limit follow directly.
\end{proof}

\section{Proof of the sparse multiplier}\label{app:sparse}

We first establish the adjacent-ratio convergence used to pass from finite binomial discrepancies to the Skellam experiment.

Let $n=N-1$. When exactly $\ell$ peers are aligned with the verified state, define
\begin{equation}
 P_{\ell,N}^{c}(z)=\Prb(V_{\ell,N}-U_{\ell,N}=z),
 \quad
 V_{\ell,N}\sim\Bin(n-\ell,cQ/N),
 \quad
 U_{\ell,N}\sim\Bin(\ell,c(1-Q)/N),
 \label{eq:peerdiscrepancy}
\end{equation}
with independent counts.

\begin{lemma}\label{lem:adjacent}
Fix a compact interval $C\subset(0,\infty)$ and let
\[
 \mathcal L_N=\left\{\ell\in\{1,\ldots,n-1\}:
 \left|\ell-\frac n2\right|\leq N^{2/3}\right\}.
\]
Uniformly over $c\in C$ and $\ell\in\mathcal L_N$,
\begin{align*}
 \sum_z\frac{[P_{\ell,N}^{c}(z)]^2}
 {P_{\ell-1,N}^{c}(z+1)}&\longrightarrow A(c,Q),\\
 \sum_z\frac{[P_{\ell,N}^{c}(z)]^2}
 {P_{\ell+1,N}^{c}(z-1)}&\longrightarrow B(c,Q),
\end{align*}
where each sum is over the support of its numerator. The summands are uniformly integrable.
\end{lemma}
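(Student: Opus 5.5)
The plan is to combine pointwise Poisson convergence of the numerator and denominator masses with a uniform domination of the summands. Domination gives uniform integrability, and then dominated convergence (Vitali) turns pointwise convergence of the summands into convergence of the sums. I treat the $A$-sum in detail; the $B$-sum is the mirror image, with the roles of $V$ and $U$ exchanged and $z+1$ replaced by $z-1$.

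First I check that the sums are well defined. Under $P^c_{\ell,N}$ the discrepancy ranges over $\{-\ell,\ldots,n-\ell\}$, and under $P^c_{\ell-1,N}$ it ranges over $\{-(\ell-1),\ldots,n-\ell+1\}$. Every $z$ in the first support therefore has $z+1$ in the second. For $c\in C$ and $\ell\in\mathcal L_N$, both binomial parameters lie strictly inside $(0,1)$, so all masses in the supports are strictly positive and no summand divides by zero.

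Second, I establish pointwise convergence. For $\ell\in\mathcal L_N$ we have $(n-\ell)cQ/N=cQ/2+O(cN^{-1/3})$ and $\ell c(1-Q)/N=c(1-Q)/2+O(cN^{-1/3})$, uniformly over $c\in C$, and the same holds with $\ell$ replaced by $\ell\pm1$. By Le Cam's inequality, $\Bin(m,\pi)$ is within total variation $m\pi^2=O(N^{-1})$ of $\Pois(m\pi)$. The Poisson masses are continuous in the mean, uniformly on compacts. Hence $P^c_{\ell,N}(z)\to\varpi_z$ and $P^c_{\ell-1,N}(z+1)\to\varpi_{z+1}$ for each fixed $z$, uniformly in $(c,\ell)$. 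Since $\varpi_{z+1}>0$, each summand converges to $\varpi_z^2/\varpi_{z+1}$.

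Third, I prove the domination bound, which I expect to be the main obstacle. The target is a constant $K$, uniform in $N$, $c\in C$ and $\ell\in\mathcal L_N$, such that
\[
 P^c_{\ell,N}(z)\le K(1+|z|)\,P^c_{\ell-1,N}(z+1)
 \qquad\text{for all } z \text{ in the support}.
\]
This matches the Skellam behaviour: $\varpi_z/\varpi_{z+1}\sim (z+1)/(cQ/2)$ as $z\to+\infty$, and the ratio stays bounded as $z\to-\infty$. To prove it, I write both masses as convolution sums:
\[
 P_\ell(z)=\sum_u b_{n-\ell}(z+u)\,b_\ell(u),
 \qquad
 P_{\ell-1}(z+1)=\sum_u b_{n-\ell+1}(z+1+u)\,b_{\ell-1}(u).
\]
Here $b_m$ denotes the $\Bin(m,\cdot)$ mass with the appropriate parameter. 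I then compare the two sums term by term in $u$, using two elementary bounds.
\begin{itemize}
 \item For the opposed count: $b_{m+1}(v+1)/b_m(v)=\frac{m+1}{v+1}\,\pi(1-\pi)^{0}$, up to a factor $(1-\pi)^{-1}$ bounded near $1$. With $m\pi\asymp cQ/2$, this is bounded below by a constant times $1/(1+v)$.
 \item For the aligned count: $b_{\ell-1}(u)/b_\ell(u)=(1-u/\ell)/(1-\pi')$, which is at least a constant for $u\le \ell/2$.
\end{itemize}
The terms with $u>\ell/2$ carry binomial mass that is exponentially small in $N$, so I discard them. The dangerous factor is $1+z+u$, since the $1/(1+v)$ bound is applied at $v=z+u$. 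I control it through the exponential moments of $U$, which are uniformly bounded; this turns $1+z+u$ into $1+|z|$ after adjusting the constant. If this direct comparison becomes messy for negative $z$, my fallback for $z<0$ is the symmetric version: condition on $V$ and compare the $U$-masses.

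Finally, I conclude. The domination bound gives
\[
 \frac{[P_\ell(z)]^2}{P_{\ell-1}(z+1)}\le K(1+|z|)\,P_\ell(z).
\]
Because $|V-U|\le V+U$ has exponential moments bounded uniformly over $N$, $c\in C$ and $\ell\in\mathcal L_N$, the tails $\sum_{|z|>M}(1+|z|)P_\ell(z)$ vanish as $M\to\infty$, uniformly in these parameters. This is the claimed uniform integrability of the summands. Truncating the sums at $|z|\le M$ and applying the uniform pointwise limits from the second step then gives convergence to $A(c,Q)$, uniformly over $c\in C$ and $\ell\in\mathcal L_N$. This also confirms that $A(c,Q)<\infty$, since the same bound holds for the Skellam limit.
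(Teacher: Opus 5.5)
Your architecture is sound and genuinely different from the paper's. A linear domination $P^c_{\ell,N}(z)\le K(1+|z|)\,P^c_{\ell-1,N}(z+1)$ would give uniform integrability at a fixed truncation level, so only fixed-$z$ Poisson convergence would be needed. But that bound, which you rightly call the crux, is not established by your sketch; two steps fail as written. Write $b_k$ and $b'_k$ for the $\Bin(k,\pi)$ and $\Bin(k,\pi')$ masses, with $\pi=cQ/N$ and $\pi'=c(1-Q)/N$.

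\textbf{First failing step.} The terms with $u>\ell/2$ cannot be dropped because their total mass is small. The target inequality is \emph{multiplicative}, and $P^c_{\ell-1,N}(z+1)$ is itself super-exponentially small when $|z|$ is of order $N$. At the bottom of the support, $z=-\ell$, the numerator consists solely of the dropped term $u=\ell$. Meanwhile $P^c_{\ell-1,N}(-\ell+1)=(1-\pi)^{n-\ell+1}(\pi')^{\ell-1}$ is far smaller than $\Prb(U>\ell/2)$, so no bound on the discarded mass can be absorbed there. Near $u=\ell$ your second bullet also degenerates: $b'_{\ell-1}(u)/b'_\ell(u)=(1-u/\ell)/(1-\pi')$ is of order $1/\ell$, and zero at $u=\ell$. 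The inequality does hold at $z=-\ell$ (the ratio is $\pi'/(1-\pi)$), but only a relative argument shows it.

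\textbf{Second failing step.} After the term-by-term comparison, what must be bounded is $\sum_u(1+z+u)w_u\big/\sum_uw_u$ with $w_u=b_{n-\ell+1}(z+1+u)\,b'_{\ell-1}(u)$. This is the conditional mean of the opposed count given discrepancy $z+1$ under the law of $P^c_{\ell-1,N}$. The conditioning event's probability varies over many orders of magnitude with $z$, and uniformly bounded unconditional exponential moments of $U$ do not by themselves control such a conditional expectation.

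\textbf{Repair.} Both problems disappear if you do not truncate. Let $V^*\sim\Bin(n-\ell+1,\pi)$ and $U^*\sim\Bin(\ell-1,\pi')$ be independent, so that $P^c_{\ell-1,N}(w)=\Prb(V^*-U^*=w)$. Insert $b'_\ell(u)=(1-\pi')b'_{\ell-1}(u)+\pi'b'_{\ell-1}(u-1)$. On the first piece, apply your first bullet in its exact form $b_{n-\ell}(v)=\frac{v+1}{(n-\ell+1)\pi}\,b_{n-\ell+1}(v+1)$; on the second, use $b_{n-\ell}(v)\le b_{n-\ell+1}(v)/(1-\pi)$. For every $z$ in the support this gives
\[
 \frac{P^c_{\ell,N}(z)}{P^c_{\ell-1,N}(z+1)}
 \le\frac{\E[V^*\mid V^*-U^*=z+1]}{(n-\ell+1)\pi}+\frac{\pi'}{1-\pi}.
\]
This is a finite-$N$ counterpart of the identity $\varpi_z/\varpi_{z+1}=\E[V\mid Z=z+1]/\alpha$, $\alpha=Qc/2$, which the paper uses for the limiting tail.

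Given $V^*-U^*=w$, consecutive conditional weights of $U^*=u$ have ratio at most $K_1/(u+1)$ on the support, with $K_1$ uniform on $C$ (use $(n-\ell+1)\pi\le cQ$ and $(\ell-1)\pi'\le c(1-Q)$). Starting from the lowest support point $u_0=\max\{0,-w\}$, this gives $\E[U^*-u_0\mid V^*-U^*=w]\le K_1e^{K_1}$. Hence $\E[V^*\mid V^*-U^*=w]\le\max\{w,0\}+K_1e^{K_1}$. Since $(n-\ell+1)\pi$ is bounded below on $C\times\mathcal L_N$, your domination holds with a uniform $K$. Letting $N\to\infty$ at fixed $z$ then gives $\varpi_z\le K(1+|z|)\varpi_{z+1}$ for the Skellam tail. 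The rest of your argument goes through, and the $B$-case is the mirror image.

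\textbf{Comparison with the paper.} The paper uses only the trivial mixture bound $P^c_{\ell,N}(z)/P^c_{\ell-1,N}(z+1)\le C_0N$. It must therefore work on a growing window $|z|\le J_N$, with relative binomial--Poisson approximation, the lower bound $\varpi_z\ge N^{-4+o(1)}$, and Chernoff tails strong enough to beat the factor $N$. Your repaired route front-loads the effort into one sharper bound and then needs only fixed-$z$ convergence. The paper's window estimates, in turn, are reused directly in Lemma~\ref{lem:weightedposteriors}.
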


\begin{proof}
We prove the first statement; the second is symmetric. The denominator has no boundary zero because the support of $P_{\ell-1,N}^{c}(z+1)$ contains the support of $P_{\ell,N}^{c}(z)$. Let $f_z$ be the discrepancy law with $n-\ell$ opposed trials and $\ell-1$ aligned trials. With $a=Qc/N$ and $b=(1-Q)c/N$,
\[
 P_{\ell,N}^{c}(z)=(1-b)f_z+bf_{z+1},
 \qquad
 P_{\ell-1,N}^{c}(z+1)=af_z+(1-a)f_{z+1}.
\]
For nonnegative $r,s$, $(Ar+Bs)/(Cr+Ds)\leq\max\{A/C,B/D\}$. Hence
\begin{equation}
 \frac{P_{\ell,N}^{c}(z)}{P_{\ell-1,N}^{c}(z+1)}\leq C_0N
 \label{eq:globalratio}
\end{equation}
uniformly on the stated parameter set and support.

Put $J_N=4\log N/\log\log N$. For $\ell\in\mathcal L_N$, the two rare-count means equal $Qc/2+O(N^{-1/3})$ and $(1-Q)c/2+O(N^{-1/3})$. Uniformly for count realisations no larger than $2J_N$, binomial masses equal their Poisson counterparts times
\[
 1+O\!\left(\frac{(J_N+1)^2}{N}+N^{-1/3}(J_N+1)\right)=1+o(1).
\]
To justify this approximation relative to probabilities on the moving window, let $a_0>0$ be a uniform lower bound for the two limiting Poisson means and let $a_1$ bound their sum from above; such constants exist because $C$ is compact in $(0,\infty)$. For $|z|\leq J_N$, the event that the count with the required sign equals $|z|$ and the other count is zero gives
\[
 \varpi_z\geq e^{-a_1}\frac{a_0^{|z|}}{|z|!}
 \geq N^{-4+o(1)}.
\]
Meanwhile, for the sum $T_N$ of the two finite-sample rare counts, a Chernoff bound gives
\[
 \Prb(T_N>2J_N)
 \leq\left(\frac{eC_1}{2J_N}\right)^{2J_N}
 =N^{-8+o(1)}
\]
for a uniform $C_1$. Thus configurations omitted by truncating the joint counts at $2J_N$ are $o(\varpi_z)$ uniformly on $|z|\leq J_N$. Combined with the preceding factorial expansion, this yields, uniformly on that window,
\[
 \frac{P_{\ell,N}^{c}(z)}{\varpi_z}=1+o(1),
 \qquad
 \frac{P_{\ell-1,N}^{c}(z+1)}{\varpi_{z+1}}=1+o(1).
\]
The truncated sum therefore converges to $\sum_{|z|\leq J_N}\varpi_z^2/\varpi_{z+1}$.

For the finite-sample tail, \eqref{eq:globalratio} and a Chernoff bound give
\[
 \sum_{|z|>J_N}\frac{[P_{\ell,N}^{c}(z)]^2}
 {P_{\ell-1,N}^{c}(z+1)}
 \leq C_0N\Prb(V_{\ell,N}+U_{\ell,N}>J_N)=o(1).
\]
For the limiting tail, set $\alpha=Qc/2$. The Poisson identity
$\E[V\mid Z=z]=\alpha\varpi_{z-1}/\varpi_z$ implies, after relabeling,
\[
 \sum_{|z|>M}\frac{\varpi_z^2}{\varpi_{z+1}}
 \leq\frac1{\alpha^2}\E\!\left[V^2\1\{V+U>M-1\}\right],
\]
which vanishes uniformly on $C$. This proves convergence and uniform integrability. The proof for $B$ uses $\E[U\mid Z=z]=\beta\varpi_{z+1}/\varpi_z$, where $\beta=(1-Q)c/2$.
\end{proof}

Let $R_y^N(z,L)$ be the focal posterior of high ability when the focal correct-report indicator is $y\in\{0,1\}$, total discrepancy is $z$, and the disclosed number of state-aligned orientations is $L$. Conditional on $L$, the market assigns the focal orientation weights $\alpha_L=(N-L)/N$ and $\beta_L=L/N$. Exact focal likelihoods are
\begin{align}
 \Lambda_t(1,z,L)
 &=\alpha_L\frac cNq_tP_{L,N}^{c}(z-1)
 +\beta_L\left[1-\frac cN(1-q_t)\right]P_{L-1,N}^{c}(z),
 \label{eq:sparseL1}\\
 \Lambda_t(0,z,L)
 &=\alpha_L\left(1-\frac cNq_t\right)P_{L,N}^{c}(z)
 +\beta_L\frac cN(1-q_t)P_{L-1,N}^{c}(z+1).
 \label{eq:sparseL0}
\end{align}
Because the likelihoods are affine in $q_t$, Bayes' rule gives
\begin{align}
 R_1^N(z,L)-p
 &=\frac{m(c/N)[\alpha_LP_{L,N}^{c}(z-1)+\beta_LP_{L-1,N}^{c}(z)]}
 {\alpha_L(c/N)QP_{L,N}^{c}(z-1)+
 \beta_L[1-(c/N)(1-Q)]P_{L-1,N}^{c}(z)},
 \label{eq:exactR1}\\
 R_0^N(z,L)-p
 &=-\frac{m(c/N)[\alpha_LP_{L,N}^{c}(z)+\beta_LP_{L-1,N}^{c}(z+1)]}
 {\alpha_L[1-(c/N)Q]P_{L,N}^{c}(z)+
 \beta_L(c/N)(1-Q)P_{L-1,N}^{c}(z+1)}.
 \label{eq:exactR0}
\end{align}
These mixture denominators also handle the finite-support boundary cases.

For $|L-N/2|\leq N^{2/3}$, binomial-to-Poisson approximation and first-order expansion give, uniformly on each fixed finite set of $z$ and compact $c$-sets,
\begin{align}
 R_1^N(z,L)&=p+\frac{mc}{N}
 \left(1+\frac{\varpi_{z-1}}{\varpi_z}\right)+o(N^{-1}),
 \label{eq:R1exp}\\
 R_0^N(z,L)&=p-\frac{mc}{N}
 \left(1+\frac{\varpi_{z+1}}{\varpi_z}\right)+o(N^{-1}).
 \label{eq:R0exp}
\end{align}
\begin{lemma}\label{lem:weightedposteriors}
Fix a compact interval $C\subset(0,\infty)$. Uniformly over $c\in C$ and integer $L$ satisfying $|L-N/2|\leq N^{2/3}$,
\begin{align}
 \frac{N}{mc}\sum_z P_{L,N}^{c}(z)
 \bigl[R_1^N(z+1,L)-R_0^N(z,L)\bigr]
 &\longrightarrow 3+A(c,Q),
 \label{eq:weightedRopposed}\\
 \frac{N}{mc}\sum_z P_{L-1,N}^{c}(z)
 \bigl[R_0^N(z-1,L)-R_1^N(z,L)\bigr]
 &\longrightarrow-3-B(c,Q).
 \label{eq:weightedRaligned}
\end{align}
\end{lemma}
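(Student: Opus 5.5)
The plan is to compute the limit pointwise in $z$ using the first-order expansions \eqref{eq:R1exp}--\eqref{eq:R0exp}, and then to pass the limit through the infinite sum. To justify that exchange, I would dominate the rescaled posterior differences by the adjacent-mass ratios whose uniform integrability is already supplied by Lemma~\ref{lem:adjacent}.

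\emph{Pointwise limit.} For the opposed case \eqref{eq:weightedRopposed}, substitute \eqref{eq:R1exp} at $z+1$ and \eqref{eq:R0exp} at $z$. This gives
\[
 \frac{N}{mc}\bigl[R_1^N(z+1,L)-R_0^N(z,L)\bigr]\to 2+\frac{\varpi_z}{\varpi_{z+1}}+\frac{\varpi_{z+1}}{\varpi_z}.
\]
Weight by $P^c_{L,N}(z)$, which converges to $\varpi_z$ uniformly on finite $z$-sets, $c\in C$ and the window $|L-N/2|\le N^{2/3}$, by the binomial-to-Poisson step in the proof of Lemma~\ref{lem:adjacent}. The formal limit is
\[
 \sum_z\Bigl(2\varpi_z+\frac{\varpi_z^2}{\varpi_{z+1}}+\varpi_{z+1}\Bigr)=3+A(c,Q).
\]
For the aligned case \eqref{eq:weightedRaligned}, substitute \eqref{eq:R0exp} at $z-1$ and \eqref{eq:R1exp} at $z$. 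The weighted limit becomes
\[
 -\sum_z\Bigl(2\varpi_z+\varpi_{z-1}+\frac{\varpi_z^2}{\varpi_{z-1}}\Bigr)=-3-B(c,Q).
\]
So the limits are correct, provided the sum can be truncated at $|z|\le M$ with tails uniformly small in $N$, $c$ and $L$.

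\emph{Domination via the exact formulas.} Here I would use \eqref{eq:exactR1}--\eqref{eq:exactR0} rather than the expansions. In the opposed case, $(N/mc)(R_1^N(z+1,L)-p)$ has numerator $\alpha_LP_{L,N}(z)+\beta_LP_{L-1,N}(z+1)$. Bound each numerator term by a single denominator term:
\begin{itemize}
 \item the $\beta_L$ term by $\beta_L[1-(c/N)(1-Q)]P_{L-1,N}(z+1)$, which gives a bounded contribution;
 \item the $\alpha_L$ term by the same denominator term, which gives at most a constant times $P_{L,N}(z)/P_{L-1,N}(z+1)$, since $\alpha_L/\beta_L$ is bounded on the window.
\end{itemize}
After weighting by $P_{L,N}(z)$, the summands are dominated by a constant times $P_{L,N}(z)+P_{L,N}(z)^2/P_{L-1,N}(z+1)$. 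This is uniformly integrable by the first part of Lemma~\ref{lem:adjacent} with $\ell=L$.

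Similarly, $(N/mc)(p-R_0^N(z,L))$ is at most a constant plus $\beta_LP_{L-1,N}(z+1)/\{\alpha_L(1-cQ/N)P_{L,N}(z)\}$. After weighting, this gives a constant times $P_{L,N}(z)+P_{L-1,N}(z+1)$, which is a probability mass and so has uniformly small tails by the Chernoff bound used in Lemma~\ref{lem:adjacent}.

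The aligned case is symmetric. The weight is $P_{L-1,N}(z)$, and the dangerous ratio is $P_{L-1,N}(z)/P_{L,N}(z-1)$. Weighted, this is $P_{\ell,N}(z)^2/P_{\ell+1,N}(z-1)$ with $\ell=L-1$, which is controlled by the second part of Lemma~\ref{lem:adjacent}. The shift from $L$ to $L\pm1$ does not leave a slightly enlarged window, so uniformity is preserved.

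\emph{Conclusion and main obstacle.} Fix $M$ large. On $|z|\le M$ the expansions converge uniformly, and the tail $|z|>M$ is uniformly small in $(N,c,L)$ by the domination above. The limiting Poisson tails of $\sum\varpi_z^2/\varpi_{z\pm1}$ also vanish, by the conditional-mean bound in Lemma~\ref{lem:adjacent}. Letting $N\to\infty$ and then $M\to\infty$ gives both limits.

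I expect the main obstacle to be the domination step, specifically the boundary $z$ where a support endpoint makes one mixture component vanish. There I would check directly that the mixture denominators in \eqref{eq:exactR1}--\eqref{eq:exactR0} still retain the component needed for each bound. This is the same support-inclusion fact invoked at the start of the proof of Lemma~\ref{lem:adjacent}.
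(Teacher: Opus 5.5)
Your proposal is correct, and your pointwise limits agree with the paper's; the difference lies in how you justify interchanging the sum and the limit. The paper works on the growing window $|z|\le J_N=4\log N/\log\log N$. There it expands the exact posteriors \eqref{eq:exactR1}--\eqref{eq:exactR0} in terms of the \emph{finite-sample} ratio $P_z/\widetilde P_z$, using $\alpha_L/\beta_L=1+O(N^{-1/3})$ and the moving-window $1+o(1)$ Poisson approximation from inside the proof of Lemma~\ref{lem:adjacent}. Summing gives $1+A(c,Q)$ and $2$. The paper then disposes of $|z|>J_N$ crudely: posterior differences are at most one, and the Chernoff bound gives $N\Prb\{|V_{L,N}-U_{L,N}|>J_N\}=o(1)$. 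You instead truncate at a fixed $M$, where only the fixed-$z$ expansions \eqref{eq:R1exp}--\eqref{eq:R0exp} are needed. You control the tail by the structural bounds $\frac{N}{mc}(R_1-p)\le C(1+P_z/\widetilde P_z)$ and $\frac{N}{mc}(p-R_0)\le C(1+\widetilde P_z/P_z)$, read off the mixture denominators. After weighting, the tail is a probability tail plus the uniformly integrable adjacent-ratio tail of Lemma~\ref{lem:adjacent}. Your route uses that lemma only through its stated conclusion and never needs a uniform approximation on a moving window, but it requires the domination step. The paper's route makes the tail trivial, at the cost of pushing the window out until $N$ times the tail probability vanishes. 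Your boundary worry is vacuous: the supports of $P_{L,N}(\cdot)$ and $P_{L-1,N}(\cdot+1)$ coincide (both equal $\{-L,\dots,n-L\}$), and likewise in the aligned case. Every denominator therefore retains the component you use whenever the weight is positive.
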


\begin{proof}
We give the argument for \eqref{eq:weightedRopposed}; the other statement is symmetric. Put $P_z=P_{L,N}^{c}(z)$, $\widetilde P_z=P_{L-1,N}^{c}(z+1)$, and retain $J_N=4\log N/\log\log N$ from Lemma~\ref{lem:adjacent}. Substitution in \eqref{eq:exactR1}--\eqref{eq:exactR0}, together with $\alpha_L/\beta_L=1+O(N^{-1/3})$, gives uniformly for $|z|\leq J_N$,
\begin{align*}
 \frac{N}{mc}\bigl[R_1^N(z+1,L)-p\bigr]
 &=1+\frac{P_z}{\widetilde P_z}+o(1),\\
 \frac{N}{mc}\bigl[p-R_0^N(z,L)\bigr]
 &=1+\frac{\widetilde P_z}{P_z}+o(1).
\end{align*}
The $o(1)$ terms are uniform: on this window the binomial masses and their adjacent ratios equal their Skellam counterparts up to $1+o(1)$, exactly as in the central-window part of Lemma~\ref{lem:adjacent}. Multiplying by $P_z$ and summing gives $1+A(c,Q)$ from the first line and $2$ from the second. For the complement, posterior differences are bounded by one, while the Chernoff estimate used in Lemma~\ref{lem:adjacent} gives
\[
 N\Prb\{|V_{L,N}-U_{L,N}|>J_N\}
 \leq N\Prb\{V_{L,N}+U_{L,N}>J_N\}=o(1)
\]
uniformly on $C$. The limiting adjacent-ratio tail is uniformly negligible by Lemma~\ref{lem:adjacent}. This proves \eqref{eq:weightedRopposed}; replacing $L$ by $L-1$ and reversing the adjacent shift proves \eqref{eq:weightedRaligned}.
\end{proof}

Hoeffding's inequality makes compositions outside the $N^{2/3}$ window exponentially negligible. Lemma~\ref{lem:weightedposteriors} therefore applies to the unconditional reputational comparison.

If the focal orientation is opposed to the state, reporting the signal changes its correct-report indicator from zero to one and raises total discrepancy from the peer value $Z$ to $Z+1$. Lemma~\ref{lem:weightedposteriors} yields
\[
 \E[R_1^N(Z+1,L)-R_0^N(Z,L)]
 =\frac{mc}{N}[3+A(c,Q)]+o(N^{-1}).
\]
If the focal orientation is aligned, following a conflicting signal changes its indicator from one to zero and lowers discrepancy from $Z$ to $Z-1$, so
\[
 \E[R_0^N(Z-1,L)-R_1^N(Z,L)]
 =-\frac{mc}{N}[3+B(c,Q)]+o(N^{-1}).
\]
The state agrees with a conflicting signal with probability $Q$ and with the orientation with probability $1-Q$. Weighting these expressions proves \eqref{eq:sparsedelta}; dividing by
$\Delta_{\Hid}(c/N)=4mdc/N+o(N^{-1})$ proves \eqref{eq:Gamma}.

It remains to establish \eqref{eq:Gammaexpansion} with a differentiable remainder. Put $t=c/2$, $q=Q$, and $r=1-Q$, so $d=q-r$. Write $\varpi_z=e^{-t}s_z(t)$. For every integer $k\geq0$,
\[
 s_k(t)=\sum_{u\geq0}\frac{(qt)^{u+k}(rt)^u}{(u+k)!u!},
 \qquad
 s_{-k}(t)=\sum_{u\geq0}\frac{(rt)^{u+k}(qt)^u}{(u+k)!u!}.
\]
Thus
\[
 A(c,Q)=e^{-t}\sum_{z\in\mathbb Z}
 \frac{s_z(t)^2}{s_{z+1}(t)}.
\]
Only the terms $z=0,1,2$ contribute through order $t$:
\[
\begin{array}{c@{\qquad}l}
 z & e^{-t}s_z(t)^2/s_{z+1}(t)\\ \hline
 0 & \displaystyle \frac1{qt}-\frac1q+
 \left(\frac1{2q}+\frac{3r}{2}\right)t+O(t^2)\\[.6em]
 1 & 2-2t+O(t^2)\\
 2 & \displaystyle \frac{3q}{2}t+O(t^2).
\end{array}
\]
To control the omitted terms and their derivatives, factor
\[
 s_k(t)=\frac{(qt)^k}{k!}u_k(t),\qquad
 s_{-k}(t)=\frac{(rt)^k}{k!}u_k(t),\qquad
 u_k(t)=\sum_{u\geq0}
 \frac{(qrt^2)^u k!}{u!(k+u)!}.
\]
On every interval $0\leq t\leq t_0$, the functions $u_k$, $1/u_k$, and their first derivatives are bounded uniformly in $k$: indeed,
\[
 1\leq u_k(t)\leq e^{qrt^2},
 \qquad |u_k'(t)|\leq2qrt e^{qrt^2}.
\]
Moreover,
\begin{align*}
 \frac{s_k(t)^2}{s_{k+1}(t)}
 &=\frac{k+1}{k!}(qt)^{k-1}
 \frac{u_k(t)^2}{u_{k+1}(t)}, &&k\geq3,\\
 \frac{s_{-k}(t)^2}{s_{-(k-1)}(t)}
 &=\frac{(rt)^{k+1}}{k^2(k-1)!}
 \frac{u_k(t)^2}{u_{k-1}(t)}, &&k\geq1.
\end{align*}
Summable factorial majorants, also after termwise differentiation, therefore imply
\[
 \sum_{k\geq3}\frac{s_k(t)^2}{s_{k+1}(t)}=O(t^2),
 \qquad
 \sum_{k\geq1}\frac{s_{-k}(t)^2}{s_{-(k-1)}(t)}=O(t^2),
\]
with derivatives $O(t)$. Summing the table gives, with a remainder satisfying $R_A(t)=O(t^2)$ and $R_A'(t)=O(t)$,
\[
 A(c,Q)=\frac1{qt}+2-\frac1q+\frac{r}{2q}t+R_A(t)
 =\frac{2}{Qc}+2-\frac1Q+\frac{1-Q}{4Q}c+O(c^2).
\]
Reflection of the Skellam law interchanges $q$ and $r$, and hence
\[
 B(c,Q)=\frac{2}{(1-Q)c}+2-\frac1{1-Q}
 +\frac{Q}{4(1-Q)}c+O(c^2),
\]
again with derivative remainder $O(c)$. Consequently,
\[
 qA-rB=2(q-r)-\frac{q-r}{2}t+O(t^2),
\]
where the derivative of the remainder is $O(t)$. Since $t=c/2$ and $d=q-r$,
\[
 \Gamma(c,Q)=\frac{3d+qA-rB}{4d}
 =\frac54-\frac{c}{16}+O(c^2),
 \qquad
 \Gamma_c(c,Q)=-\frac1{16}+O(c).
\]
This proves both the continuously differentiable extension at zero and the stated expansion.

For the high-intensity limit, let $\alpha=Qc/2$ and $\beta=(1-Q)c/2$. Poisson size biasing gives
\[
 A(c,Q)=1+\frac{\operatorname{Var}(\E[V\mid Z])}{\alpha^2},
 \qquad
 B(c,Q)=1+\frac{\operatorname{Var}(\E[U\mid Z])}{\beta^2}.
\]
The law of total variance implies $1\leq A\leq1+1/\alpha$ and $1\leq B\leq1+1/\beta$. Hence both converge to one and \eqref{eq:Gamma} converges to one. This completes the proof of Theorem~\ref{thm:sparsemultiplier}.

\section{Additional likelihood details}\label{app:likelihood}

This appendix records two observations useful for extensions.

\subsection{An explicit reconstruction kernel}

Fix a composition observation consisting of a named report vector $r$ and count $K$. Let
\[
 \mathcal B(r,K)=\{b\in\operatorname{supp}\Pi_N:
 K(b)=K,\ (r,b)\text{ lies in the common support}\}.
\]
For any $b\in\mathcal B(r,K)$, define
\[
 \mathcal K(b\mid r,K)=\Prb(b\mid r,K,\theta=+1).
\]
Equation \eqref{eq:exactLLR} implies that this conditional probability is unchanged if $\theta=+1$ is replaced by $\theta=-1$. Hence $\mathcal K$ is a state-independent Markov kernel. Define it arbitrarily when $\mathcal B(r,K)$ is empty. Applying it to $(r,K)$ reconstructs exactly the individually attributed experiment. This is the reverse garbling used in Theorem~\ref{thm:blackwell}.

\subsection{Hidden-disclosure accuracy at sparse fidelity}

If $Nx_N\to c$, a report agrees with the state under secrecy with probability $1/2+dx_N/2$. Let $P_N^+$ and $P_N^-$ denote the product report laws in the two states. The Bernoulli KL divergence per coordinate is $O(x_N^2)=O(N^{-2})$, so
\[
 \mathrm{KL}(P_N^+\Vert P_N^-)=O(N^{-1}).
\]
Pinsker's inequality gives $\lVert P_N^+-P_N^-\rVert_{\mathrm{TV}}=O(N^{-1/2})$. With equal state priors, optimal classification accuracy is $1/2+\lVert P_N^+-P_N^-\rVert_{\mathrm{TV}}/2$, proving the hidden-disclosure part of Theorem~\ref{thm:aggregation}.

{\small\raggedright

}

\end{document}